\newtheorem{example}{Example}
\newtheorem{definition}{Definition}
\newtheorem{theorem}{Theorem}
\newcolumntype{P}[1]{>{\centering\arraybackslash}p{#1}}
\definecolor{Gray}{gray}{0.65}
\begin{document}


\title{Network Embedding: on Compression and Learning}



%
%
%
%

\numberofauthors{2} 

\author{
%
%
\alignauthor
Esra Akbas\\
       \affaddr{Department of Computer Science}\\
       \affaddr{Oklahoma State University}\\
       \affaddr{Stillwater, OK 74078, USA}\\
       \email{eakbas@okstate.edu}
\alignauthor
 Mehmet Aktas\\
       \affaddr{ Department of Mathematics and Statistics}\\
       \affaddr{University of Central Oklahoma}\\
       \affaddr{Edmond, OK 73034, USA}\\
       \email{maktas@uco.edu}
}

\date{30 July 1999}
\newcommand{\my}{\textsf{NECL}}
\newcommand{\dw}{\textsf{DeepWalk}}
\newcommand{\ntv}{\textsf{Node2vec}}
\maketitle

\begin{abstract}
Recently, network embedding that encodes structural information of graphs into a vector space has become popular for network analysis. Although recent methods show promising performance for various applications, the huge sizes of graphs may hinder a direct application of existing network embedding method to them. This paper presents \my, a novel efficient \textbf{N}etwork \textbf{E}mbedding method with two goals.
1) Is there an ideal \textbf{C}ompression of a network? 2) Will the compression of a network significantly boost the representation \textbf{L}earning of the network? For the first problem, we propose a neighborhood similarity based graph compression method that compresses the input graph to get a smaller graph without losing any/much information about the global structure of the graph and the local proximity of the vertices in the graph. For the second problem, we use the compressed graph for network embedding instead of the original large graph to bring down the embedding cost.
\my\ is a general meta-strategy to improve the efficiency of all of the state-of-the-art graph embedding algorithms based on random walks, including \dw\ and \ntv, without losing their effectiveness. Extensive experiments on large real-world networks validate the efficiency of \my\ method that yields an average improvement of 23 - 57\% embedding time, including walking and learning time without decreasing classification accuracy as evaluated on single and multi-label classification tasks on real-world graphs such as DBLP, BlogCatalog, Cora and Wiki.
\end{abstract}
\section{Introduction}
Many real-world data can be modeled as networks to capture the interaction (i.e. edges) between individual units (i.e. vertices). Node classification, community detection and link prediction are some applications of network analysis in many different areas such as social networks and biological networks. Node classification is to find the label of vertices using the topology of the network and other labeled vertices such as predicting demographic values, interest, beliefs or other characteristics
of the user in a social network or prediction labels of proteins in a biological network~\cite{letovsky2003predicting,Gligor_pro, bhagat2011node}. Similarly, link prediction is to determine whether there is an edge between a pair of vertices in a network such as collaboration recommendation on academic social networks and identifying hidden interactions in a protein-protein interaction (PPI) network as a biological network ~\cite{lopes2010collaboration,sharan2007network}.

On the other hand, there are some challenges in network analysis such as high computational complexity, low parallelizability and inapplicability of machine learning methods~\cite{cui2018survey}. Recently, network embedding that encodes structural information of graphs into a vector space has become popular for network analysis~\cite{Zhang2017,hamilton2017representation,goyal2018graph,cai2018comprehensive,cui2018survey}. The network embedding is defined as mapping the network data into a low-dimensional vector space which can capture characteristics or role of vertices in the network based on their connections~\cite{deepwalk}.

Previous researchers considered the network embedding as a dimensionality reduction~\cite{Belkin_lap}. While these methods are effective on small graphs, scalability is the major concern as the time complexity of these methods are at least quadratic in the number of graph vertices. This makes them impossible to apply on large-scale networks with billions of vertices~\cite{Zhang2017,cai2018comprehensive,cui2018survey}. In recent years, the network embedding problem has been changed as a part of the optimization problem to preserve the local and global network structures and node proximity. Researchers focus on the scalable methods that use graph factorization or neural networks. Many of them aim to preserve the first and second order proximity~\cite{tang2015line} or local neighborhood proximity with path sampling using short random walks such as \dw~\cite{deepwalk} and \ntv~\cite{node2vec}. The idea for path sampling is that vertices in a similar neighborhood will get similar paths and so their representation will be similar.

\begin{figure*}[t!]
    \centering
    \begin{subfigure}{0.5\textwidth}
        \centering
        \includegraphics[width=0.99\textwidth]{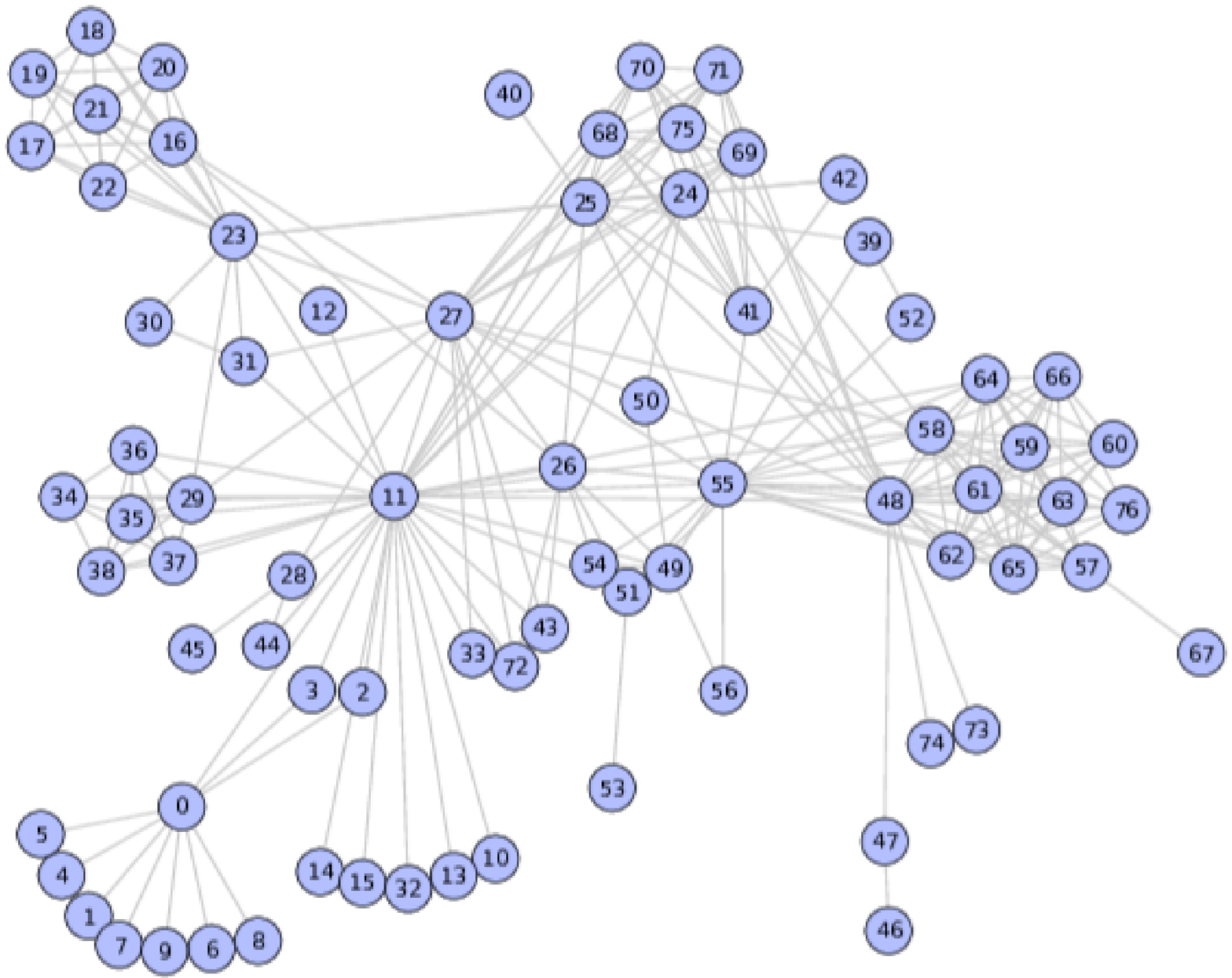}
            \caption{Original network}
   \end{subfigure}%
~  
   \begin{subfigure}{0.5\textwidth}
       \centering
       \includegraphics[width=0.99\textwidth]{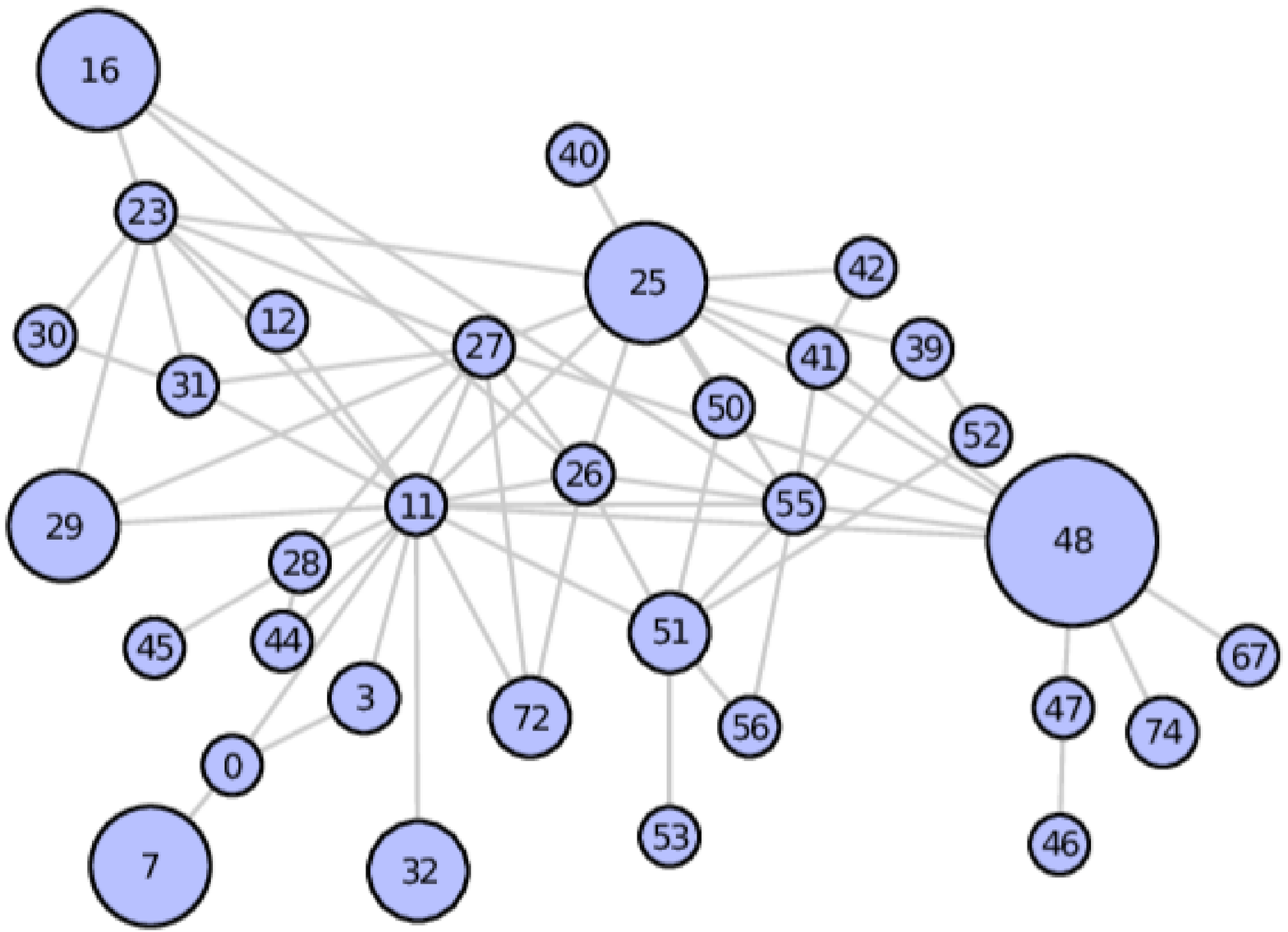}
       \vspace{2mm}
       \caption{Compressed network}
   \end{subfigure}
   \caption{Example of graph compressing on Les Miserables network} 
   \label{fig:toy}
\end{figure*}
Although recent methods show promising performance for various applications, the problem of graph embedding still have some challenges that the huge sizes of real-world graphs may obstruct direct applications of existing graph embedding methods on them. On the other hand, when we consider a compressed or summary graph conserving the key structures and patterns of the original graph, many methods would be applicable to large graphs~\cite{Liu_2018}. The aim of graph compressing is to create a smaller graph without losing any/much information about global structure of the graph and the local relationship between the vertices of the graph~\cite{zhou_comp}. Vertices with similar characteristics are grouped and represented by super-nodes in a compressed graph.

Meanwhile, we have an observation that if two vertices share many common neighbors, they have strong second-order similarities and their paths from random walks will be very similar. From similar paths, we may get very similar representations for these vertices. This means we repeat the same walking and learning process to get similar results for these two vertices. 

In addition to these, optimization on the co-occurrence probability of the vertices could easily get stuck at a bad local minima as the result of poor initialization. This may cause in generating dissimilar representation for vertices within the same or similar neighborhood set. With combining them into super-nodes, we can give initial knowledge to learning process which can result in better representation.

According to these observations, we investigate network embedding via two problems:
\begin{enumerate}
   \item Is there an ideal \textbf{compression} of a network?
  \item  Will the compression of a network significantly boost the representation \textbf{learning} of the network?
  \end{enumerate}
  
As a solution to these problems, we propose \my, a novel network embedding method. For the first problem, we propose a neighborhood similarity based graph compression method that compresses the input graph to get a relatively smaller graph without losing any/much information about global structure of the graph and local proximity of the vertices in the graph. \my\ compresses the graph by merging vertices with high number of similar neighbors into super-nodes. For the second problem, we use the compressed graph for network embedding instead of original large graph to bring down the embedding cost. This benefits the efficiency greatly since we do not need to process similar vertices separately to get similar representation. Instead, we will learn the representation of super-nodes and use their representations as the representation of vertices which are merged to create those super-nodes. Embedding a compressed graph will be easier and more efficient embedding than the original graph. The reason is that we will get less pairwise relationships from random walks on smaller set of super-nodes and this generates less diverse training data for embedding part which makes optimization easier. \my\ is a general meta-strategy to improve the efficiency of the state-of-the-art algorithms for embedding graphs, including \dw\, and \ntv. 

\begin{example}
In Figure~\ref{fig:toy}, we present a graph compressing on the well-known Les Miserables network where vertices correspond to the characters in the novel and edges connect co-appearing characters. While the original network has 77 vertices and 254 edges, the compressed network has 33 vertices and 64 edges. As we see in the figure, the compressed network preserves the local structure of vertices in super-nodes without losing the global structure of the graphs. For example, in Figure~\ref{fig:toy}-(a) neighborhood sets of the vertices $\{1,4,5,6,7,8,9\}$ are same including just node $0$. Hence, random walks from these vertices will have to go from node $0$ and get the same results. We also expect that representations of these vertices should be same or very similar. Another example is that neighborhood set of the vertices $\{16,17,18,19,20,21,22\}$ are same including $\{16,17,18,19,20,21,22,23\}$ except 16 has neighbors $\{26,27\}$. Thus, random walks from these vertices will return to themselves or go far in the graph from node 23. Therefore, they will get the same walking results and as a result similar representations. Instead of walking separately from each of these vertices and learning to get the same or similar feature vectors for them, when we merge them into super-nodes as $7$ and $16$ respectively in the compressed graph in Figure~\ref{fig:toy}-(b), we just need to do walking for one super-node and learn one feature vector that we can use for all of them. After applying merge operation to the whole graph, we get significantly smaller graph (Figure~\ref{fig:toy}(b)) than original graph (Figure~\ref{fig:toy}(a)). Walking on the smaller graph and learning representation from walking results will be more efficient than doing them on the large original graph without decreasing the effectiveness of the learning process.
\end{example}

We summarize the contributions of \my\ as follows,
\begin{itemize}
    \item New graph compressing method: Based on the observation that vertices with similar neighborhood sets get similar results from random walks and eventually similar representation. We merge these vertices into super-nodes to get a compressed (smaller) graph which preserves the characteristics of the original graph.
    
    \item Efficient graph embedding on compressed graph: We do random walks on and embedding the compressed graph, which has less number of vertices and edges the large original graph, as a result they will be easier and more efficient than walking on and embedding the original graph. We use the representation of super-nodes as the representation of vertices in the original graph.
    
    \item Better efficiency without losing effectiveness: The compressed graph preserve the global structure of the network and super node of it preserves local neighborhood of vertices. Using embedding of compressed graph does not decrease the effectiveness. We demonstrate that \my(DW), and \my(N2V) embeddings consistently have better efficiency with less walking and training time but similar or better accuracy than the original methods on multi class and multi-label classification tasks on several real-world networks.

\end{itemize}
\section{Network Embedding using similarity based Compression}
 In this section We first give preliminary information about network embedding and graph compressing, then we describe our neighborhood similarity based graph compression algorithm and how to use compressed graph towards optimizing the efficiency of network embedding.  

\subsection{Preliminaries}
In this section, we briefly discuss the necessary preliminaries for our new meta-strategy for graph embedding. 

In this paper, we consider an undirected, connected, simple graph $G = (V_G;E_G)$ where $V_G$ is the set of vertices, and $E_G \subseteq\{V_G \times V_G\}$ is the set of edges. The set of neighbors for given a vertex $v\in V_G$ is denoted as $N_G(v)$, where $N_G(v)=\{u|u\in V_G:(u,v) \in E_G\}$.\vspace{2mm}

\noindent\textbf{Compressed graph}.\\ Compressed graph of a given graph $G = (V_G;E_G)$ is represented as $CG = (S;M)$ where $S = (V_S;E_S)$ is the graph summary with super-nodes $V_S$ and super-edges $E_S$. Every node $v$ in $V_G$ belongs to a super-node in $V_S$ and $M$ is a mapping from each node $v$ to its super-node in $V_S$. A super-edge $E = (V_i; V_j)$ in $E_S$ represents the set of all edges between vertices in the super-nodes $V_i$ and $V_j$ . \vspace{2mm}

\noindent\textbf{Network Embedding}.\\
\dw~\cite{deepwalk} is the pioneer work that uses the idea of word representation learning~\cite{mikolov2013efficient,mikolov2013distributed} for network embedding. While vertices in a graph are considered as words, neighbors are considered as their context in natural language. A graph is represented as a set of random walk paths sampled from it. The learning process leverages the co-occurrence probability of the vertices that appear within a window in a sampled path. The node representation is learned by training the Skip-gram model~\cite{mikolov2013efficient,mikolov2013distributed} on the random walks. With co-occurrence of the node pairs in the sampled path, a ``corpus" $D$ is generated. To be formal, the corpus $D$ is a multiset that counts the multiplicity of vertex-context pairs. Node pairs with high co-occurrence probability are regarded as neighbors. As the size of the window is usually no less than two, we call these kind of neighbors as higher-order proximity. 

We define a representation as a mapping $\phi : V \rightarrow \mathbb{R}^{d}, d<< |V|$ which represents each vertex $v \in V$ as a point in a low dimensional space $\mathbb{R}^{d}$. Here $d$ is a parameter specifying the number of dimensions of our feature representation. For every source node $u \in V$, we define $N_S(u) \subset V$ as a network neighborhood of node $u$ generated through a neighborhood sampling strategy $S$.

We seek to optimize the following objective function, which maximizes the log-probability of observing a network neighborhood $N_S(u)$ for a node $u$ conditioned on its feature representation, given by $\phi$

\begin{equation}
    \max_f \sum_{u\in V} log Pr(N_S(u)|\phi(u))
    \label{eq:opt}
\end{equation}

There is an assumption as the conditional independence of vertices to make the optimization problem tractable with ignoring the vertex ordering in the random walk. Therefore, the likelihood is factorized by assuming that the likelihood of observing a neighborhood node is independent of observing any other neighborhood node given the feature representation of the source:
$$Pr(N_S(u)|\phi(u))= \prod_{n_i \in N_S(u)} Pr(n_i|\phi(u))$$

The conditional likelihood of every source-neighborhood node pair is modeled as a softmax unit parametrized by a dot product of their features:

 $$Pr(n_i|\phi(u))=\frac{\exp(\phi (n_i) \cdot \phi(u))}{\sum_{v\in V} \exp(\phi (v) \cdot \phi(u))}$$

It is too expensive to compute the summation over all vertices for large networks and we approximate it using negative sampling \cite{mikolov2013distributed}. We optimize Equation (\ref{eq:opt}) using stochastic gradient ascent over the model parameters defining the embedding $\phi$.\vspace{2mm}
 
\noindent\textbf{Random walk based sampling}\\
The neighborhoods $N_S(u)$ are not restricted to just immediate neighbors but can have vastly different structures depending on the sampling strategy $S$. There are many possible neighborhood sampling strategies for vertices as a form of local search. Different neighborhoods coming from different strategies result in different learned feature representations. For scalability of learning, random walk based methods are used to capture the structural relationships of vertices. They maximize the co-occurrence probability of subsequent vertices within a fixed length window of random walks to preserve higher-order proximity between vertices. With random walks, networks are represented as a collection of vertex sequence. In this section, we take a deeper look at the network neighborhood sampling strategy based on random walks and the proximity captured by random walks.

The co-occurrence probability of node pairs depends on the transition probabilities of vertices. Considering a graph $G$, we define adjacency matrix $A$ that is symmetric for undirected graphs. For an unweighted graph, we have $A_{ij}=1$ if and only if there exists an edge from $v_i$ to $v_j$ and $A_{ij}=0$ otherwise. For a graph with adjacency matrix $A$, we define the diagonal matrix, known as degree matrix, as $D_{ij}=\sum_k A_{ik}$ if $i=j$ and $D_{ij}=0$ otherwise. In a random walk, transition probability from one node to other depends on the degree of the vertices. The probability of leaving a node from one of its edges is split uniformly among the edges. We define this 1 step transition probability as $T$: $T=D^{-1}A$ where $T_{ij}$ is the probability of a transition from vertex $v_i$ to vertex $v_j$ within one step. 

 
We observe here that if two vertices, $i,j$, of a graph have many common neighbors, they also have similar transition probabilities to other vertices. This means that if $A_i$ and $A_j$ are similar, $T_i=A_i*D_{ii}^1$ and $T_j=A_j*D_{jj}^1$ will be similar as well. Hence they have similar neighborhood and get similar neighborhood sets from random walks and, as a result, they get very similar representations from the learning process. Therefore, while random walk based neighborhood sampling strategy captures the higher order proximity within the neighborhood of the vertices, the representation learning process based on language model, e.g. Skip-gram~\cite{mikolov2013efficient}, captures the co-occurrence probability of the vertices that appear within a window in a random walk.
 
\subsection{Neighborhood Similarity based Graph Compression}
The critical problem for graph compressing with preserving global structures of the graph is to accurately identify vertices that have similar neighborhood so are more likely to have similar representation. In this section, we discuss how to select vertices to merge into super-nodes.
 
\subsubsection{Motivation} 

The motivation of our method is that if two vertices have the same neighbors, their representations should be very similar. For example, in the toy graph in Figure~\ref{fig:toygc}, the neighbor sets of node $a$ and $b$ are same. Hence, their transition probabilities to the other neighbor vertices are also same, i.e. $p(n_i|a)=p(n_i|b)=1/4$ for all $i\in \{1,2,3,4\}$. Starting on either $a$ or $b$ will yield the same walk, and we will get the same neighborhood set for them. Therefore, instead of walking and learning representations for both $a$ and $b$, it is enough to learn for just one of them. We can merge this node pair $(a,b)$ into one super-node $ab$. Transition probabilities of this super-node to neighbors of $a$ and $b$ are still same with $a$ and $b$ i.e. $p(n_i|ab)=1/4$ for all $i \in \{1,2,3,4\}$. When we obtain the representation of the super-node $ab$, we can use it as the representation of each node in this pair. Merging these vertices keeps preserving the first and second order proximity. Thus this does not affect the results of walking and learning whereas it increases the efficiency.

Furthermore, compressing may change the transition probability of vertices since the number of their neighbors may decrease. As a result, the transition probability of each neighbors changes. For example, in the toy graph in Figure~\ref{fig:toygc}-(a), while the transition probability from $n_1$ to its neighbors is $\frac{1}{|N(n_1)|}$, after compressing, it becomes $\frac{1}{|N(n_1)|-1}$ since number of neighbors decrease by one. In order to avoid this problem, we give weights to edges of super-nodes based on the number of merged edges within the compression. For example, the super-edge between super-node $ab$ and $n_1$ includes 2 edges which are $(a,n_1)$ and $(b,n_1)$. Therefore, the weight of the super-edge ($ab, n_1$) should be 2. 

 \begin{figure}[t]
    \centering
    \begin{subfigure}{0.15\textwidth}
        \centering
        \includegraphics[width=0.9\textwidth]{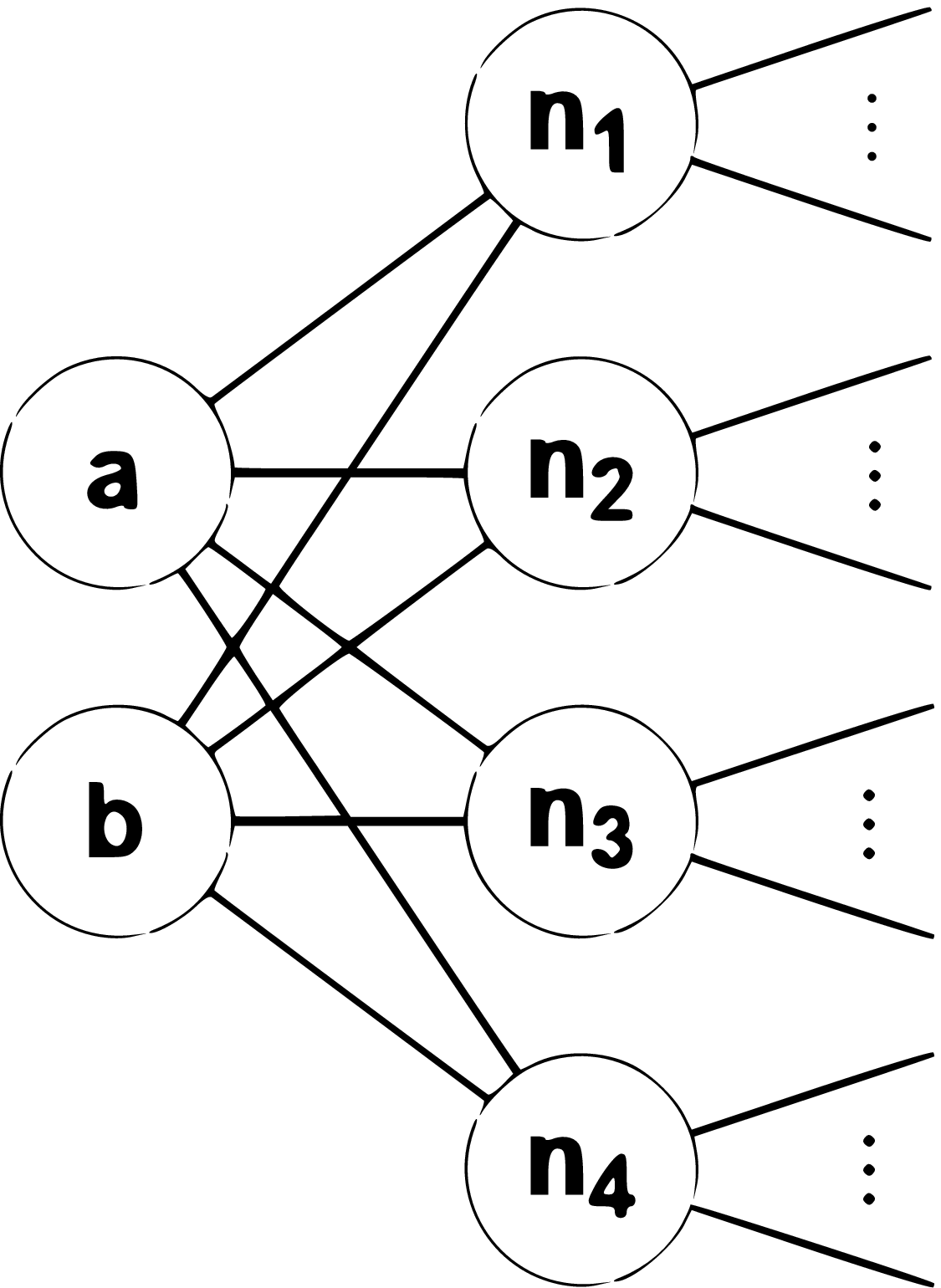}
        \caption{ }
    \end{subfigure}%
    ~
    \begin{subfigure}{0.15\textwidth}
        \centering
        \includegraphics[width=0.9\textwidth]{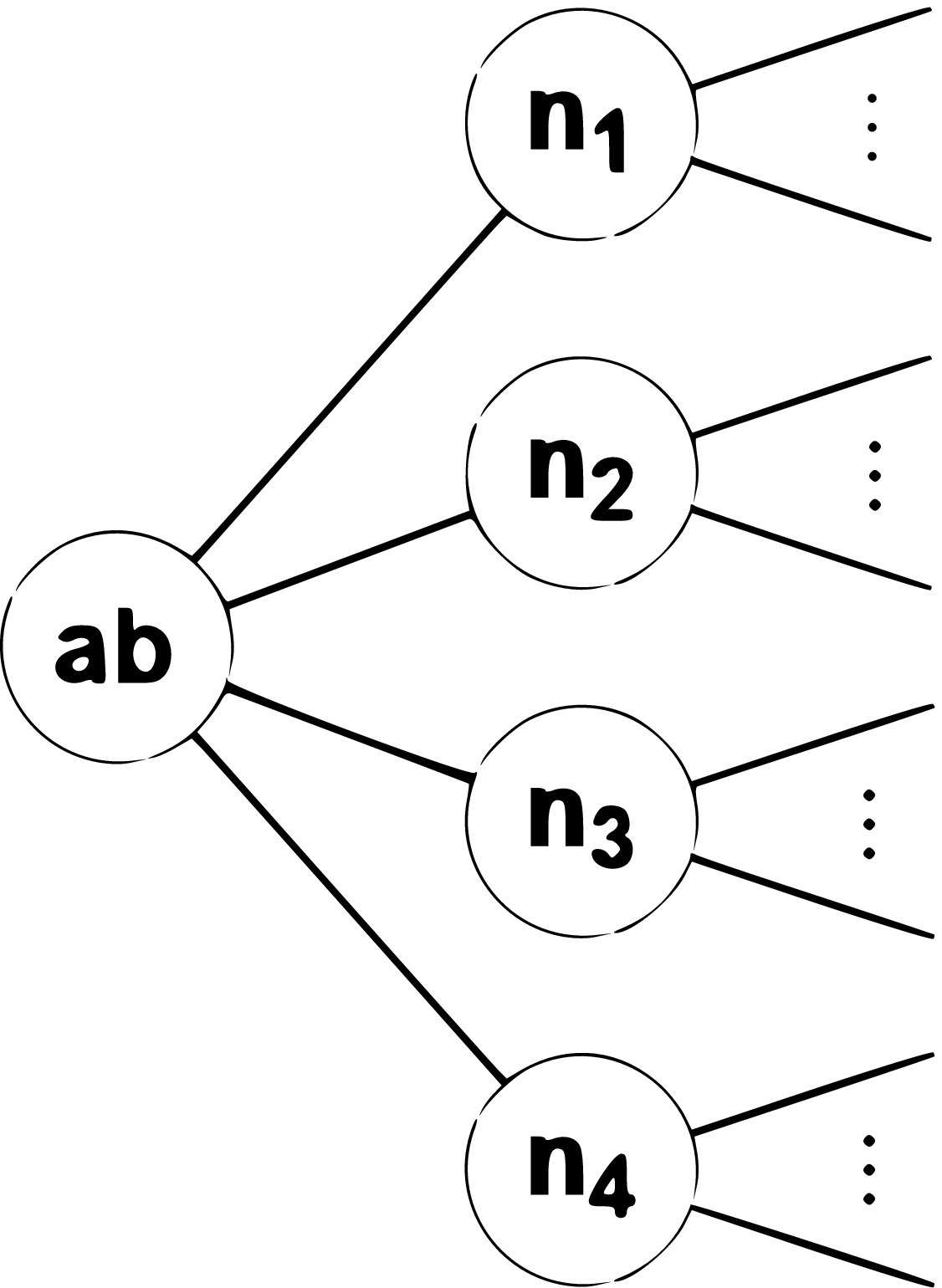}
        \caption{ }
    \end{subfigure}
    \caption{Example of graph compressing: $a$ and $b$ are merged into super-node $ab$ connected to the neighbours of both $a$ and $b$.} 
    \label{fig:toygc}
\end{figure}

In a real-world graph, it is not expected to have too many vertices with the exact same neighbors. However, for many graph mining problems, such as node classification and graph clustering, if two vertices share many common neighbors, they are expected to be in the same class or cluster although their neighbor sets are not completely same. Hence we expect to have similar feature vectors for the vertices in the same class/cluster after embedding. From these observations, we can also apply the same merge operation on these vertices as well. Following the same idea in the example above, if neighbors of two vertices are similar (but not exactly the same), instead of learning representation for each separately, we can merge them into a super-node.

We now define our graph compressing algorithm formally as follows.

\subsubsection{Graph Compressing} For a given graph $G$, if a set of vertices $n_1,n_2,...,n_r$ in $V_G$ have similar neighbors, we merge these vertices into one super-node $n_{12...r}$ to get a smaller compressed graph $G'(V_{G'},E_{G'})$. The compressed graph $G'$ preserves the local and global structure of the original graph but has significantly fewer vertices and edges. 
 
To decide vertices to merge, we define the \textit{neighborhood similarity} based on the transition probability. Before defining the neighborhood similarity, here we first show that cosine similarity between transition probabilities of two vertices $u, v$, $T_u$ and $T_v$, are determined by the number of their common neighbors.

\begin{theorem} \label{thm:nbor}
Let $T$ be the 1-step transition probability matrix of vertices $V$ in a graph $G$ and let $u,v \in V$. Let $T_u$ and $T_v$ be the transition probability from vertices $u$ and $v$ to other vertices. Then the cosine similarity between $T_u$ and $T_v$ is 
$$
sim(T_u,T_v)=|N(u)\cap N(v)|.
$$
 \end{theorem}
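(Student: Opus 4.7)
The plan is to unfold the cosine similarity directly from the definitions already fixed in the preliminaries, where $T = D^{-1}A$, and then reduce every ingredient to counts of neighbors. The first step is to write out the row vectors $T_u$ and $T_v$ explicitly: each is supported on the neighborhood of its base vertex, taking the constant value $1/\deg(u)$ on $N(u)$ and $1/\deg(v)$ on $N(v)$ respectively, and vanishing elsewhere. This makes the downstream sums trivial to index by shared support.

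Next, I would compute the three quantities that feed into the cosine formula. The inner product $T_u \cdot T_v$ collapses to a sum over the common neighbors, since both factors are simultaneously nonzero only on $N(u) \cap N(v)$; the sum evaluates to $|N(u) \cap N(v)|/(\deg(u)\deg(v))$. The Euclidean norms are immediate from the same support description: $\|T_u\|^2 = \sum_{j \in N(u)} 1/\deg(u)^2 = 1/\deg(u)$, and analogously $\|T_v\|^2 = 1/\deg(v)$. These two short calculations are essentially all the proof needs.

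The final step is to assemble the cosine similarity as $T_u \cdot T_v$ divided by $\|T_u\|\,\|T_v\|$; the degree factors partially cancel and one is left with
\[
\frac{T_u \cdot T_v}{\|T_u\|\,\|T_v\|} \;=\; \frac{|N(u) \cap N(v)|}{\sqrt{\deg(u)\,\deg(v)}}.
\]

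The main obstacle, in my view, is not the arithmetic but reconciling the outcome with the stated right-hand side $|N(u) \cap N(v)|$. Strictly speaking, my calculation recovers the intersection count only up to the normalization factor $1/\sqrt{\deg(u)\deg(v)}$, so the theorem as written is either (i) an equality only up to a degree-dependent constant, (ii) an assertion about the raw inner product rather than the normalized cosine, or (iii) the result of a different implicit scaling of the transition vectors (for instance, using $A_u, A_v$ in place of $T_u, T_v$, where the cosine similarity is $|N(u) \cap N(v)|/\sqrt{\deg(u)\deg(v)}$ as well). I would flag this interpretive gap up front and then present the computation above, since in any reading the underlying message the theorem is meant to convey is the same: the similarity of transition distributions between $u$ and $v$ is governed entirely by the size of their common neighborhood.
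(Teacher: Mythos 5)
Your computation is correct, and it follows the same route as the paper's own proof: write $T_u = A_u/|N(u)|$, evaluate the inner product as $|N(u)\cap N(v)|/(|N(u)||N(v)|)$, compute the two norms, and divide. The divergence in the final answer comes entirely from the norms. The paper asserts $\|T_u\| = 1/|N(u)|$ and $\|T_v\| = 1/|N(v)|$, which makes the denominator $1/(|N(u)||N(v)|)$ cancel the inner product's prefactor exactly and yields the clean $|N(u)\cap N(v)|$. But for the Euclidean norm that cosine similarity requires, your value is the right one: $\|T_u\|^2 = \sum_{j\in N(u)} 1/|N(u)|^2 = 1/|N(u)|$, so $\|T_u\| = 1/\sqrt{|N(u)|}$, and the similarity is $|N(u)\cap N(v)|/\sqrt{|N(u)||N(v)|}$. (The paper's claimed norm would be correct for the $\ell_\infty$ norm of $T_u$, or as the value of $\|T_u\|^2$ rather than $\|T_u\|$, but not as the $\ell_2$ norm.) So the interpretive gap you flag is real and is not a misreading on your part: the theorem as stated holds only under the paper's erroneous norm, and the honest statement is the degree-normalized one you derived. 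Note that your corrected quantity still supports the paper's downstream use of the result --- it is a genuine normalized similarity in $[0,1]$ that is large exactly when the common neighborhood is large relative to the degrees, which is in the same spirit as the $Nsim$ measure the paper actually adopts afterward. Your instinct to present the correct computation and flag the mismatch up front is the right call.
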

 
\begin{proof}
The cosine similarity between $T_u$ and $T_v$ is defined by
\begin{equation}
sim(T_u,T_v)=\frac{\sum_iT_{ui}T_{vi}}{||T_u||  ||T_v||}
\end{equation}

By definition of $T$, we have $T_u=\frac{A_u}{|N(u)|} $ and  $T_v=\frac{A_v}{|N(v)| }$. Furthermore, we have 
$$||T_u||=1/|(N(u)\text{, } ||T_v||=1/|(N(v)$$ and 
$$\sum_i A_{ui} A_{vi}=|N(u) \cap N(v)|.$$ 
Hence, if we plug in these into the Equation (1), we get

\begin{align*} 
\displaystyle 
sim(T_u,T_v) &=\frac{\sum_iT_{ui}T_{vi}}{||T_u||  ||T_v||} =\frac{\sum_i \frac{A_{ui}}{|N(u)|} \frac{A_{vi}}{|N(v)|}}{1/|N(u)| \times 1/|N(v)|} \\
&= \frac{{\frac{1}{|N(u)||N(v)| }}|N(u) \cap N(v)|}{\frac{1}{|N(u)||N(v)| }} \\
&=|N(u)\cap N(v)|.
\end{align*}

This finalizes the proof.
\end{proof}

From Theorem \ref{thm:nbor}, we see that the similarity of transition probabilities from two vertices to other vertices depends on the similarity of their neighbors. Therefore we define the neighborhood similarity between two vertices as follows.

 \begin{definition}{(Neighborhood similarity)} Given a graph $G$, the neighborhood similarity between two vertices $u,v$ is given by 
 \end{definition} \begin{equation}
 Nsim(u,v)=\frac{2|N(u) \cap N(v)|}{|N(u)|+|N(v)|}
 \end{equation}

In order to normalize the effect of high degree vertices, we divide the number of common neighbors by degree of vertices. The neighborhood similarity is between 0 and 1 where it is 0 when two vertices have no common neighbor and 1 when both have the exact same neighbors. According to the neighbor similarity, we merge vertices whose similarity value is greater than a given threshold.

\begin{algorithm}[t]
\DontPrintSemicolon
\SetAlgoLined

\KwIn{$G(V_G,E_G)$, similarity threshold $\lambda$}
\KwOut{$S(V_{G'},E_{G'}, W_E)$,mapping \textbf{M} \\\textbf{M} is a mapping from super-node to original node}
$S\leftarrow G$\;
$NSQ \leftarrow \emptyset $\;
\For{$v \in V_G$}
{\For {$u \in N_G(v)$}
{\For {$k \in N_G(u)$}
{    
    Compute Neighborhood similarity between $v$ and $k$ as $NSim(v,k)$ \\
    $NSQ \leftarrow NSQ \cup (v,k)$
}
}}
\For{ $(v,k) \in NSQ$}
{
\If{$NSim(v,k) >\lambda$}
{Merge them into a super-node $s_{v,k}$\;
$M(s_{v,k}) \leftarrow  v$; $M(s_{v,k}) \leftarrow  k$\;
Delete $v$ and $k$ from $S$ and add $s_{v,k}$ into $S$. }
\For{$ng \in N_S(v)$}
{add edge between $s_{v,k}$ and $ng$ \\
$w(ng,s_{vk})=w(ng,v)$ }
\For{$ng \in N_S(k)$}
{add edge between $s_{v,k}$ and $ng$ if there is no\\
$w(ng,s_{vk})=w(ng,s_{vk})+w(ng,k)$ }
}
\caption{Graph Compressing($G, \lambda$)}
\label{alg:spar}
\end{algorithm}

The neighborhood similarity based graph compressing algorithm is given in Algorithm~\ref{alg:spar}. It is clear that the vertices with a nonzero neighborhood similarity are 2-step neighbors. Therefore, we do not need to compute the similarity between all pairs of the vertices, instead, we just need to compute the similarity between vertices and its neighbors' neighbors. For each node $v \in V_G$, we compute the similarity between $v$ and each $k$ as neighbors of neighbors (Line 3-10). Then, we check the similarity value of all pairs ($u$, $k$) in the list and if it is higher than the given threshold $\lambda$ (line 12), we merge them $u$ and $k$ into a super-node $s_{u,k}$ (line 13). Then we delete edges of $u$ and $k$ and add edges between neighbors of $u$ and $k$ and new super-node $s_{u,k}$ (line 17-24). We give the weights to edges of super-nodes. Original edge weights are assigned to 1. Threshold $\lambda$ decides the trade-off between efficiency and effectiveness. If we use a larger value, it will merge less number of vertices. On the other hand, if we use a smaller value, we merge more vertices and as a side effect, we may merge some dissimilar vertices as well, that results in an increase in efficiency but causes a decrease in accuracy. Note that, the order of merging is arbitrary and one super-node may include more than two vertices of the original graph. For example, if the similarity between the vertices $x$ and $y$, $NSim(x,y)$, and the vertices $y$ an $z$, $NSim(y,z)$, are both bigger than given threshold, we merge $x$ and $y$ in $s_{x,y}$ and then we merge $s_{x,y}$ and $z$ into $s_{x,y,z}$. Therefore, during the merge operation, we check whether the node $y$ is merged with another node and if so, we get the super-node of the original node $x$.

\subsubsection{Network embedding on compressed graph}
Our algorithm for network embedding on a compressed graph is given in Algorithm~\ref{alg:emb}. After getting the weighted compressed graph $S$ (line 1), we obtain the representation of super-nodes $V_S$ as $\phi_s$ in the compressed graph with the provided network embedding algorithm (line 2). We apply any random walk based representation learning algorithm on the compressed graph. We just need to apply weighted random walks to take the edge weights into consideration. As the size of the compressed graph is smaller than the original graph, it is more efficient to get embeddings of super-nodes than vertices. Finally, we assign the embedding of super-nodes to vertices according to the mapping $M$ obtained from the compression (line 3-6).

\begin{algorithm}[t!]
\DontPrintSemicolon
\SetAlgoLined
\KwIn{$G(V_G,E_G)$, similarity threshold $\lambda$}
\KwOut{Representation $\phi(u)$ for all $v \in V_G$}
$S,M\leftarrow$ GraphCompressing($G, \lambda$)\\
$\phi_{S} \leftarrow$ WightedGraphEmbeding($S$)\\
\For{$V_i\in V_S$}{
\For{$v_j \in M(V_i)$}{
$\phi(v_j) \leftarrow \phi_S(V_i)$}}
\caption{\my: Network Embedding on Compressed Graph}
\label{alg:emb}
\end{algorithm}

\section{Experiments}
We perform experimental studies to evaluate the efficiency and effectiveness of our algorithms on challenging multi-class and multi-label classification tasks in several real-world networks. We first provide an overview of the datasets and embedding methods used for experiments. We further show the performance of algorithms and also the improvement of our method on efficiency and discuss parameter sensitivity for different values of similarity threshold $\lambda$ and training ratio.

\subsection{Datasets}
The general statistics of the datasets used for experiments are reported in Table \ref{table:datatcl}.

\begin{itemize}
\item \textbf{Cora} - Cora is a citation network of machine learning papers. The labels of vertices indicate the topic of the paper. Each paper has a single topic. We convert it to an undirected graph and just use link information. We do not consider the attribute information of vertices which are word vectors indicating the absence/presence of the corresponding word from a given dictionary.
\item \textbf{Wiki} - Wiki is a network with vertices as web pages from 19 classes. Each page has a single label. The link among different vertices is the hyperlink on the web page. We convert it to undirected graph and just use link information. We do not consider the attribute information of vertices which are the TF-IDF values of web pages.

\item \textbf{DBLP} - This is a network of co-authorship of researchers in computer science. The labels represent the research areas in which a researcher publishes his work. The 4 research areas included in this dataset are DB, DM, IR, and ML. A researcher may have more than one research area.
    
\item \textbf{BlogCatalog} - BlogCatalog is a social network of users as bloggers on the BlogCatalog website. The link shows the relationships between users. The labels of a user represent the categories that blogger has interest and published in extracted from the metadata provided by the user. A user may have more than one label.

\end{itemize}

\begin{table}[t]
\footnotesize
\centering
\caption{Graphs statistics ($K =10^3$ and $M=10^6$)}
\label{table:datatcl}
\renewcommand{\arraystretch}{1.2}
\begin{tabular}{|c||c|c|c|c|}
 \hline 	
 \cellcolor{gray!60}\textbf{Network} &\cellcolor{gray!60}$\mathbf{|V|}$&\cellcolor{gray!60}$\mathbf{|E|}$ & \cellcolor{gray!60}\textbf{class \#} & \cellcolor{gray!60}\textbf{Multi-label}\\\hline\hline
 Wiki &2405&23192&17&No\\\hline
 Cora & 2708 & 10858 & 7&No\\\hline
 DBLP& 51330 & 133664 & 4&Yes\\\hline
 BlogCatalog& 10312K &668K & 39&Yes  \\\hline
\end{tabular}
\end{table}
\subsection{Baseline methods}
For the performance evaluation, we use \dw\ and \ntv\ as baseline embedding methods in our model and compare our model with them. We combine each baseline methods with \my\ and compare their performance. We give a brief explanation about these methods as follows: 
\begin{itemize}
    \item \textbf{\dw} - \dw\ is a random walk based method for network embedding. It preserves the higher order proximity between vertices with generating random walks of fixed length from all the vertices of a graph. With considering the walks as sentences in a language model, it optimizes the log-likelihood of random walks using the Skip-gram model~\cite{mikolov2013distributed}, which is for learning word embeddings. \dw\ uses hierarchical softmax for the efficiency of optimization.
\item \textbf{\ntv} - \ntv\ is a random walk based network embedding method which makes an improvement to the random walk phase of \dw. It applies biased random walks using the return parameter $p$ and the in-out parameter $q$ to combine DFS-like and BFS-like neighborhood explorations. With this way, they preserve the network community and structural roles of vertices. Different than \dw, \ntv\ uses negative sampling for optimization.
\end{itemize}

\textbf{Parameter Settings:} For \dw\, \ntv\, and \my(DW), \my(N2V), we set the following parameters: the number of random walks $\gamma$, walk length $t$, window size $w$ for the Skip-gram model and representation size $d$. The parameter setting for all models is $\gamma = 40$, $t = 10$, $w = 10$, $d = 128$. The initial learning rate and final learning rate are set to 0.025 and 0.001 respectively in all models.

\subsection{Classification}
In this section, we compare our method with the baseline methods in two different classification tasks, namely single-label and multi-label classifications. In the former case, vertices have only one label (Cora and Wiki datasets) and in the latter case, they can have more than one label.

To evaluate our method, firstly, we obtain the embeddings of the vertices with each method and then use them as features to train a classifier. A portion of the labeled vertices are sampled randomly from the graph to train the classifier and the rest of the vertices are used for testing. 

To have a detailed comparison between \my\ and the baseline methods, we vary the portion of labeled vertices for classification and similarity threshold value $\lambda$ and present the macro and micro $F_1$ scores with walking and embedding times. We also report the number of edges and vertices in the compressed graph to see how much each graph is compressed. We increase $\lambda$ from 0.45 to 0.8 to test its effect on the efficiency and effectiveness of the embedding algorithms. While we vary the training ratio on the Cora, Wiki and DBLP datasets from $1\%$ to $50\%$, we vary the training ratio on the BlogCatalog network from $10\%$ to $80\%$. The number of class labels of BlogCatalog is about 10 times than other graphs, thus we use a larger portion of labeled vertices

To ensure the reliability of our experiment, the classification process is repeated 10 times, and the average macro $F_1$, micro $F_1$ scores and running times are reported. All are performed on a server running Ubuntu 14:04 with $4$ Intel $2.6$ GHz ten-core CPUs and 48 GB of memory.

\subsubsection{Single-label Classification}

In these experiments, each node in the datasets has a single label from multi-class values. For the classification task, the multi-class SVM is employed as the classifier which uses the one-vs-rest scheme.

\begin{table*}

\centering{
\caption{Performance comparison of the single-label classification tasks for the similarity threshold $\lambda=0.5$ and training ratio $5\%$ for Cora and Wiki} 
\label{table:compsin}

{\begin{tabular}{|c | c c c | c c c | }
 \hline
 
 \multirow{2}{*}{$ $} & \multicolumn{3}{|c|}{\cellcolor{gray!60}\textbf{Cora (5\%)}} & \multicolumn{3}{|c|}{\cellcolor{gray!60}\textbf{Wiki (5\%)}} \\ \cline{2-7}
 
 & \cellcolor{gray!25}\textbf{\my(DW)} & \cellcolor{gray!25}\textbf{DW} & \cellcolor{gray!25}\textbf{Gain} \% & \cellcolor{gray!25}\textbf{\my(DW)} & \cellcolor{gray!25}\textbf{DW} & \cellcolor{gray!25}\textbf{Gain} \% \\ \hline
\cellcolor{gray!25}\textbf{Macro} $\mathbf{F_1}$ & 0.671 & 0.675 & -0.61 & 0.344	& 0.342 & 0.58 \\
\cellcolor{gray!25}\textbf{Micro} $\mathbf{F_1}$ & 0.704 & 0.704 & 0.01 & 0.477 & 0.483 & -1.24 \\
\cellcolor{gray!25}\textbf{Time(s)} & 5.17 & 8.29 & \textbf{37.65} & 4.84 & 8.98 & \textbf{46.07}\\
\hline
 & \cellcolor{gray!25}\textbf{\my(N2V)} & \cellcolor{gray!25}\textbf{N2V} & \cellcolor{gray!25}\textbf{Gain} \% & \cellcolor{gray!25}\textbf{\my(N2V)} & \cellcolor{gray!25}\textbf{N2V} & \cellcolor{gray!25}\textbf{Gain} \%\\ \hline
 \cellcolor{gray!25}\textbf{Macro} $\mathbf{F_1}$ & 0.666 & 0.671 & -0.84 & 0.342 & 0.348 & -1.72 \\
\cellcolor{gray!25}\textbf{Micro} $\mathbf{F_1}$ & 0.691 & 0.709 & -2.52 & 0.475 & 0.498 & -4.62\\
 \cellcolor{gray!25}\textbf{Time(s)} & 11.96 & 17.96 & \textbf{33.39} & 9.41& 19.10 & \textbf{50.74} \\ \hline
 & \cellcolor{gray!25}\textbf{Compressed} & \cellcolor{gray!25}\textbf{Original}& \cellcolor{gray!25}\textbf{Gain} \%  & \cellcolor{gray!25}\textbf{Compressed} & \cellcolor{gray!25}\textbf{Original}& \cellcolor{gray!25}\textbf{Gain} \% \\ \hline
\cellcolor{gray!25}$\mathbf{|V|}$ &1427&	2708 & \textbf{47.30} & 1060 & 2405 & \textbf{55.93} \\
 \cellcolor{gray!25}$\mathbf{|E|}$ & 5236 &	10858 & \textbf{51.78} & 8584 & 23192 & \textbf{62.99} \\ \hline
\end{tabular}}}
\end{table*}
\vspace{3mm}
\begin{figure*}
    \centering
    \begin{subfigure}[h]{0.32\textwidth}
        \centering
        \includegraphics[width=1.12\textwidth]{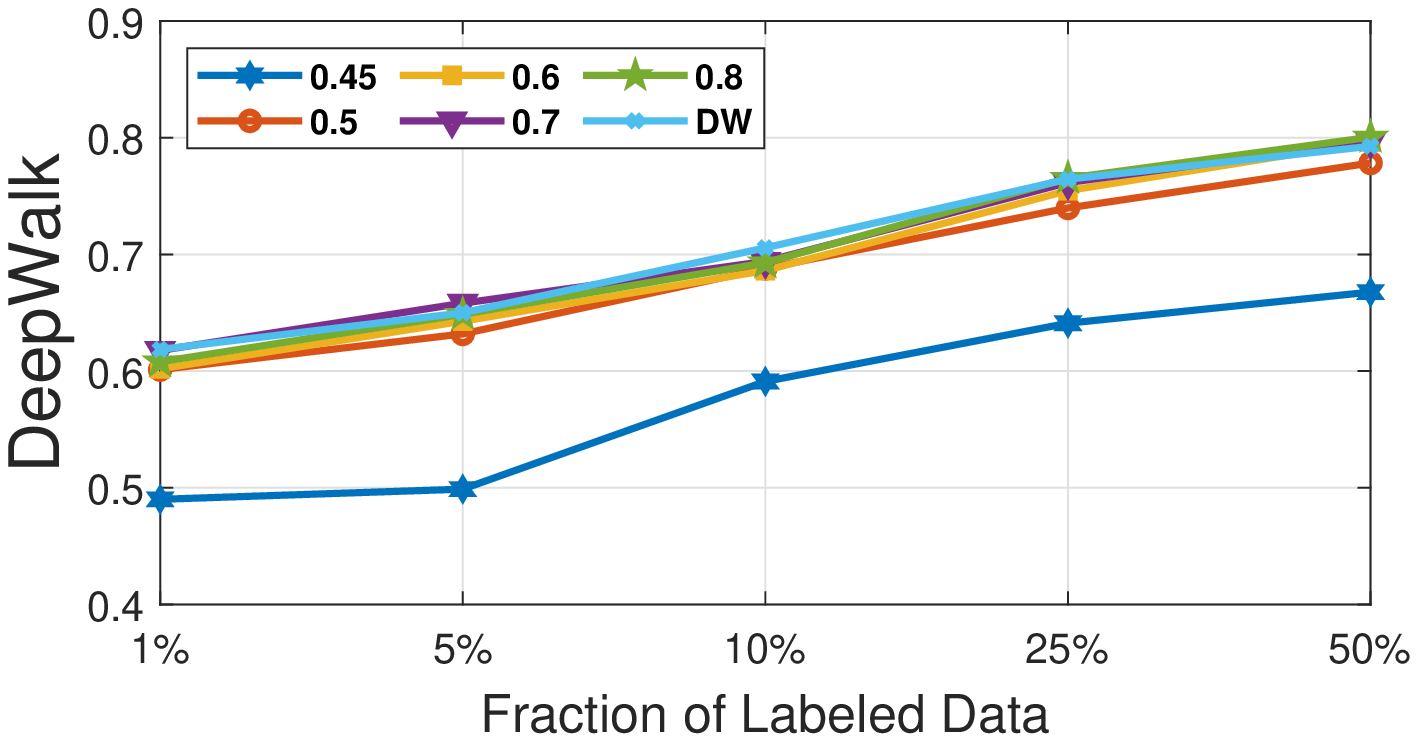}
    \end{subfigure}%
    ~ 
    \begin{subfigure}[h]{0.32\textwidth}
        \centering
        \includegraphics[width=1.12\textwidth]{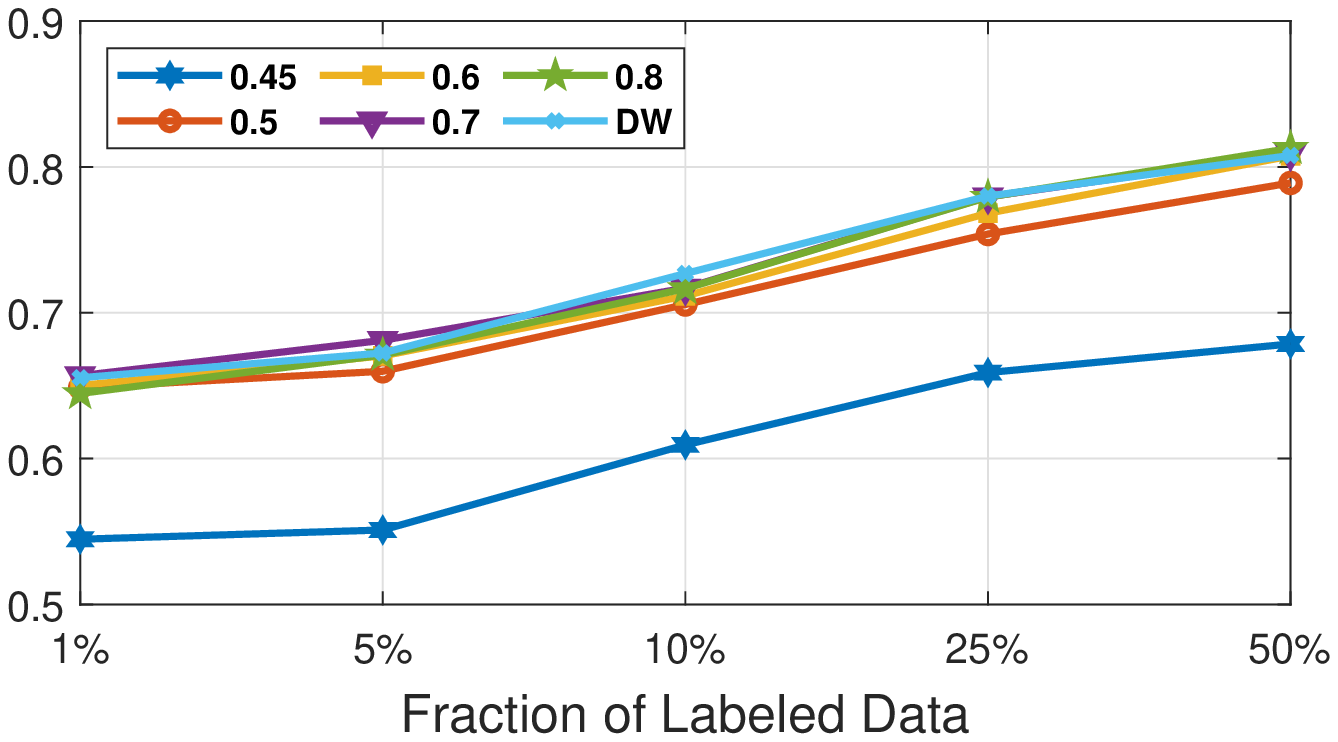}
    \end{subfigure}
        ~ 
    \begin{subfigure}[h]{0.32\textwidth}
        \centering
        \includegraphics[width=1.12\textwidth]{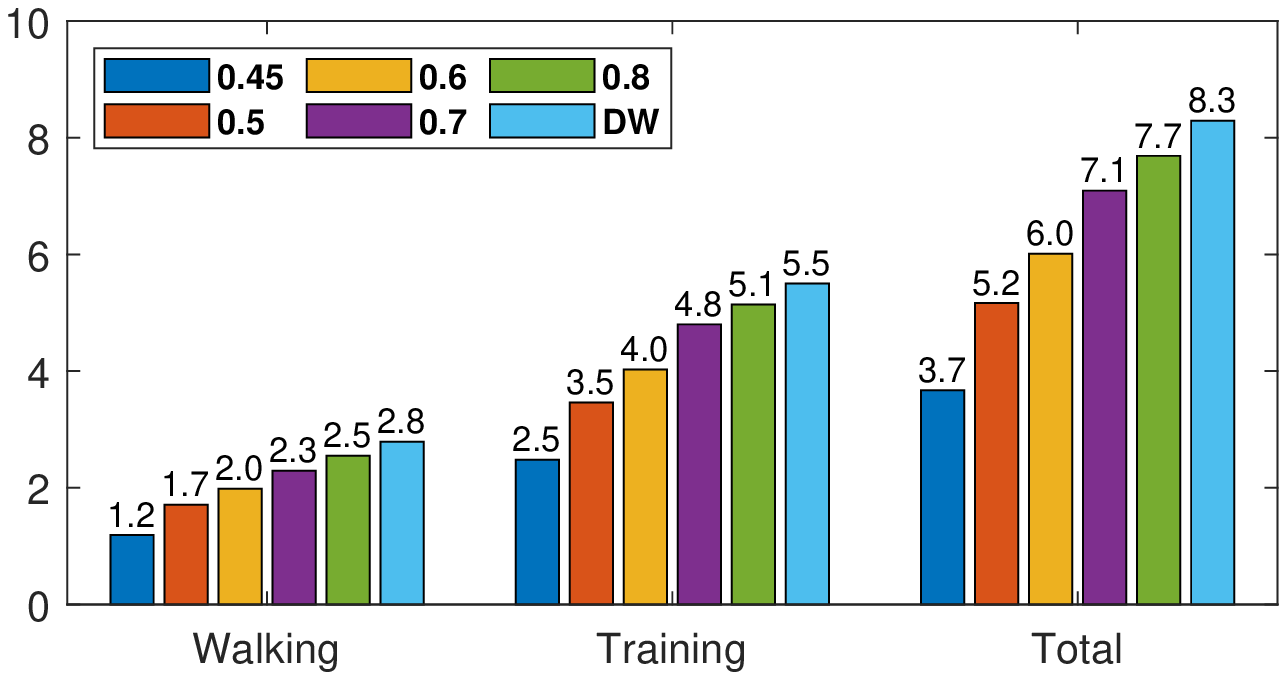}
    \end{subfigure}
~
    \begin{subfigure}[h]{0.32\textwidth}
        \centering
        \includegraphics[width=1.12\textwidth]{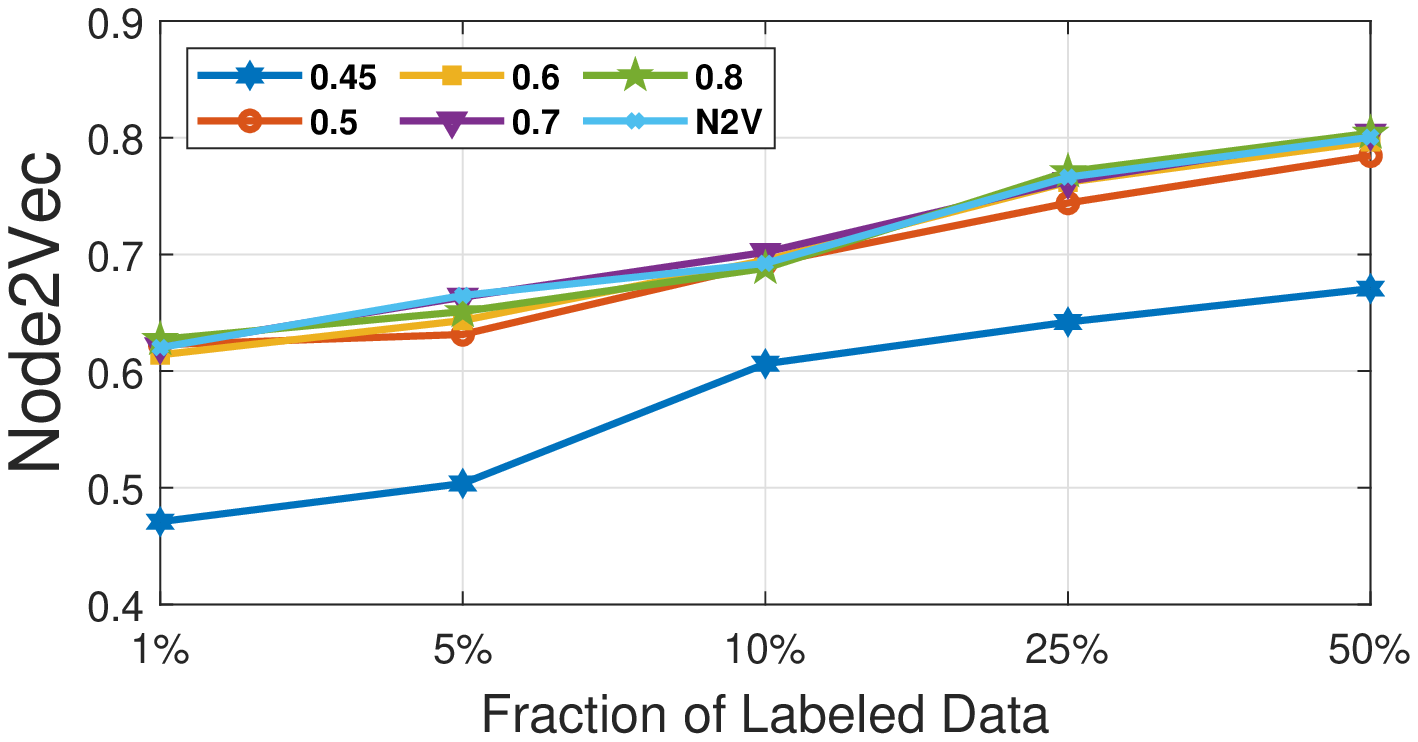}
        \caption{Macro $F_1$}
    \end{subfigure}%
    ~ 
    \begin{subfigure}[h]{0.32\textwidth}
        \centering
        \includegraphics[width=1.12\textwidth]{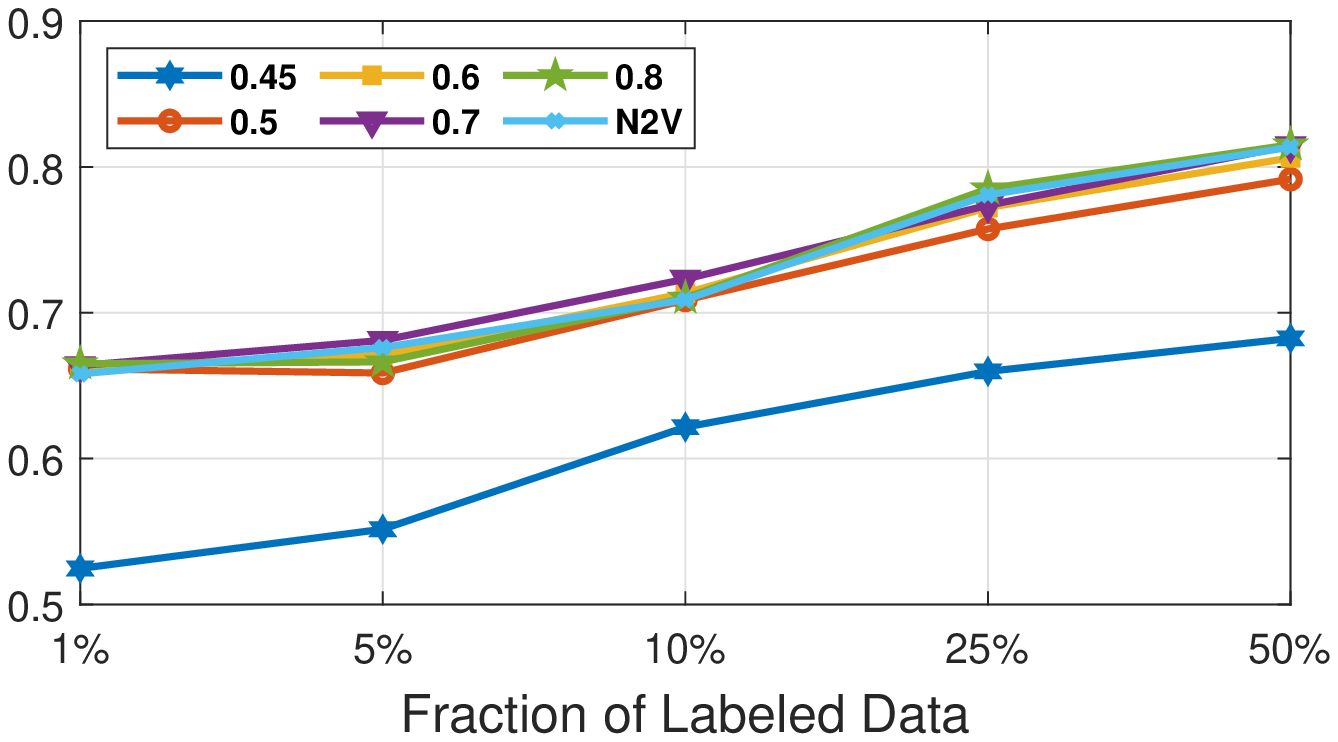}
        \caption{Micro $F_1$}
    \end{subfigure}
        ~ 
    \begin{subfigure}[h]{0.32\textwidth}
        \centering
        \includegraphics[width=1.12\textwidth]{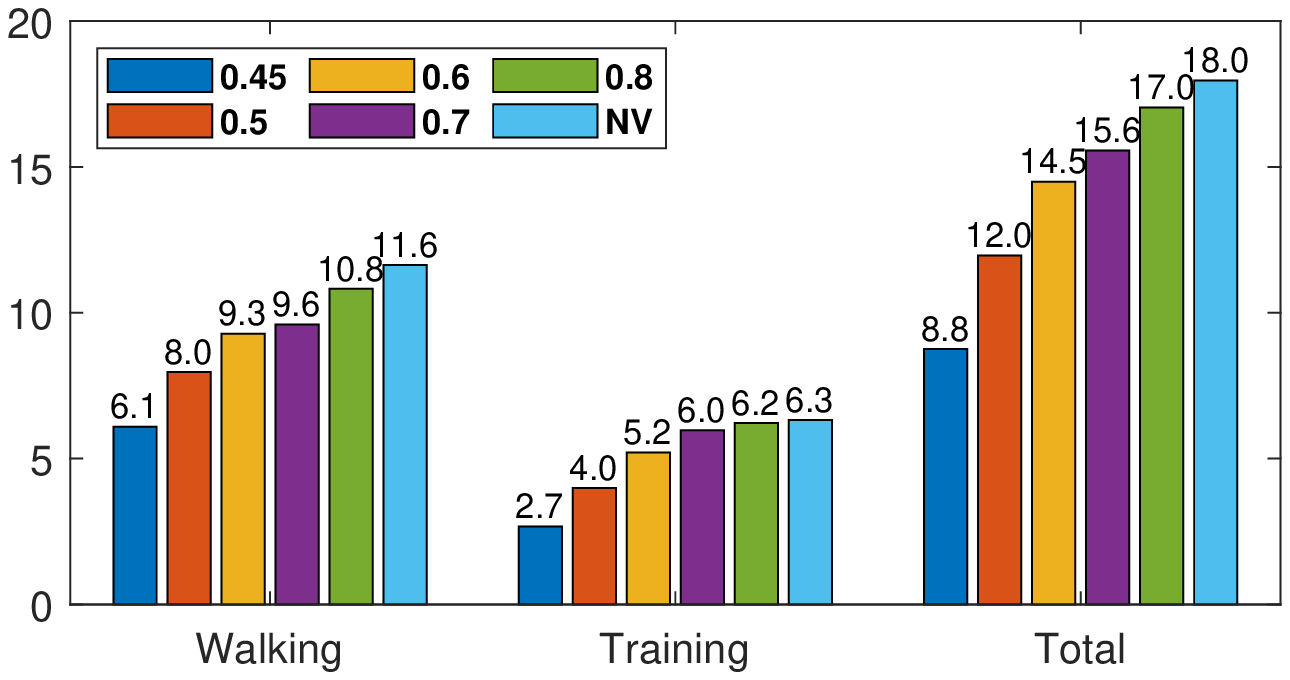}
        \caption{Running Time(s)}
    \end{subfigure}
    
    \caption{Detailed single-label classification result on Cora.}
    \label{fig:core}
\end{figure*}
\vspace{2mm}

\begin{figure*}
    \centering
    \begin{subfigure}[h]{0.32\textwidth}
        \centering
        \includegraphics[width=1.12\textwidth]{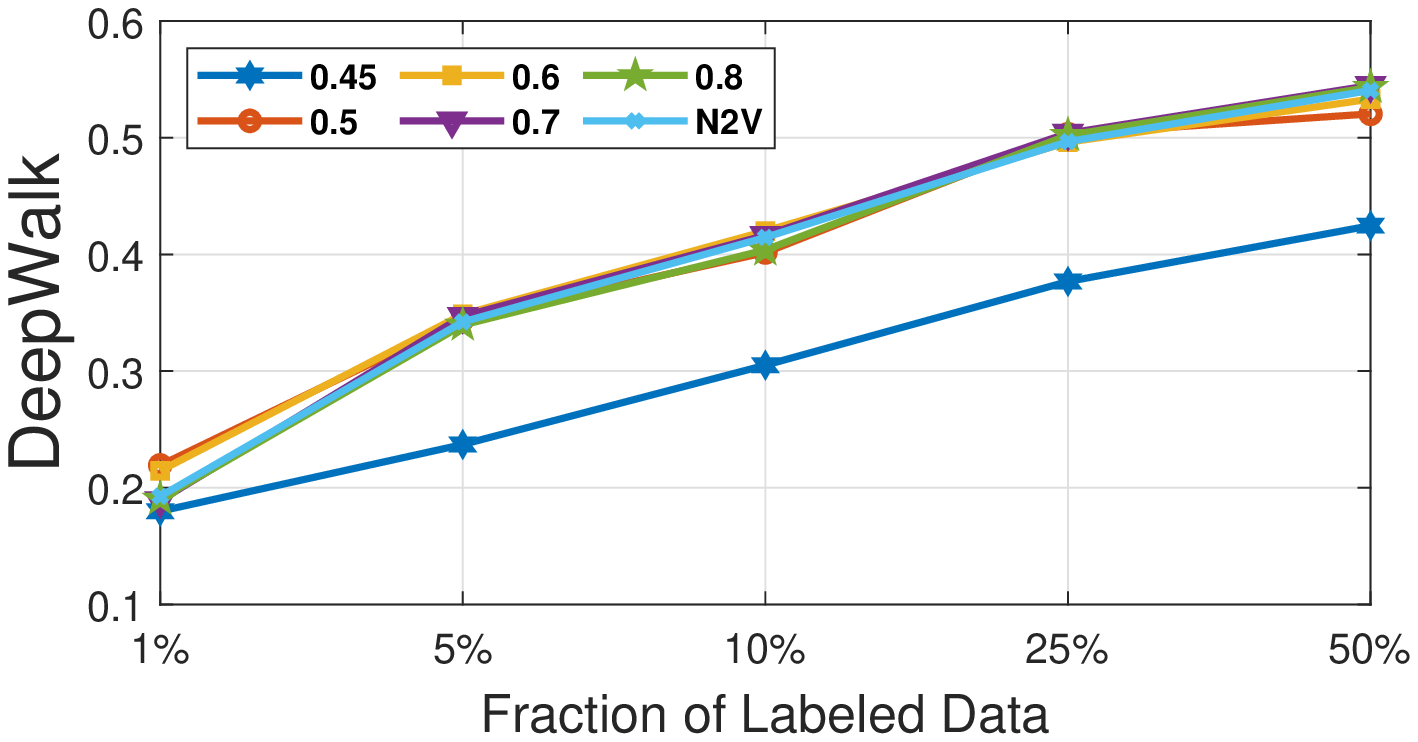}
    \end{subfigure}%
    ~     
    \begin{subfigure}[h]{0.32\textwidth}
        \centering
        \includegraphics[width=1.12\textwidth]{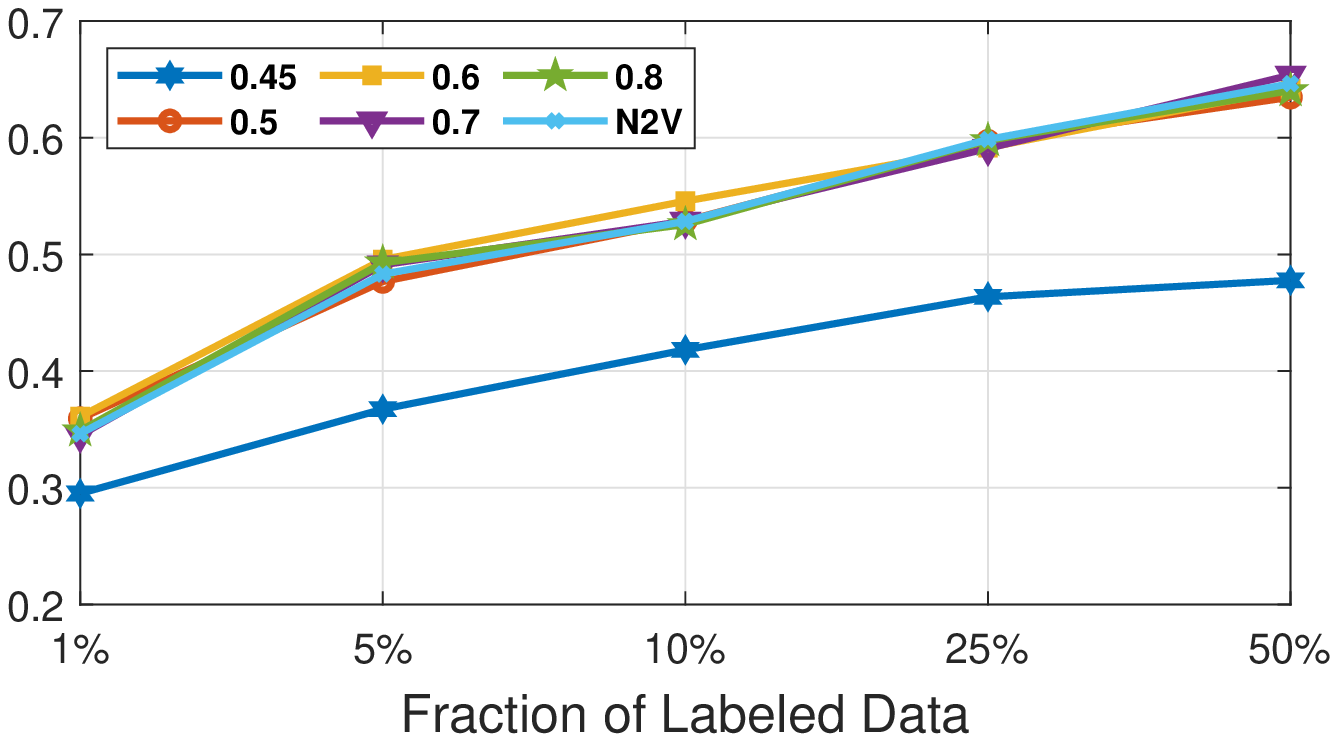}
    \end{subfigure}%
    ~ 
    \begin{subfigure}[h]{0.32\textwidth}
        \centering
        \includegraphics[width=1.12\textwidth]{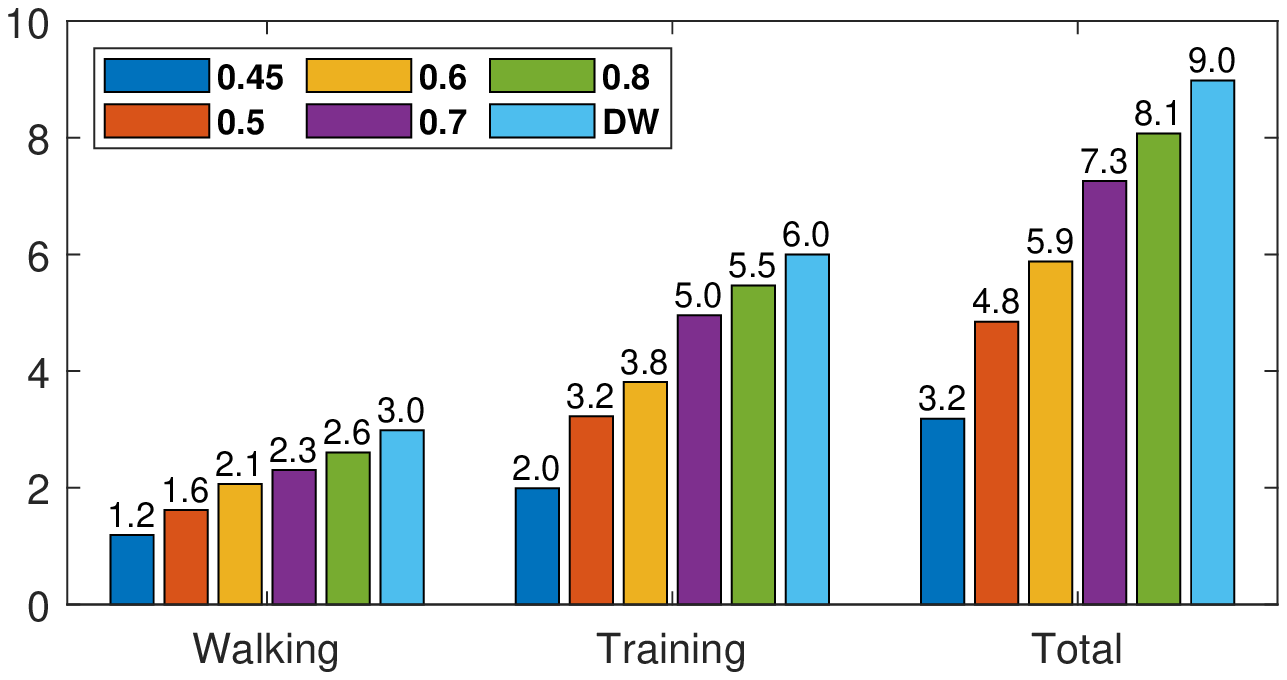}
    \end{subfigure}
        ~ 
    \begin{subfigure}[h]{0.32\textwidth}
        \centering
        \includegraphics[width=1.12\textwidth]{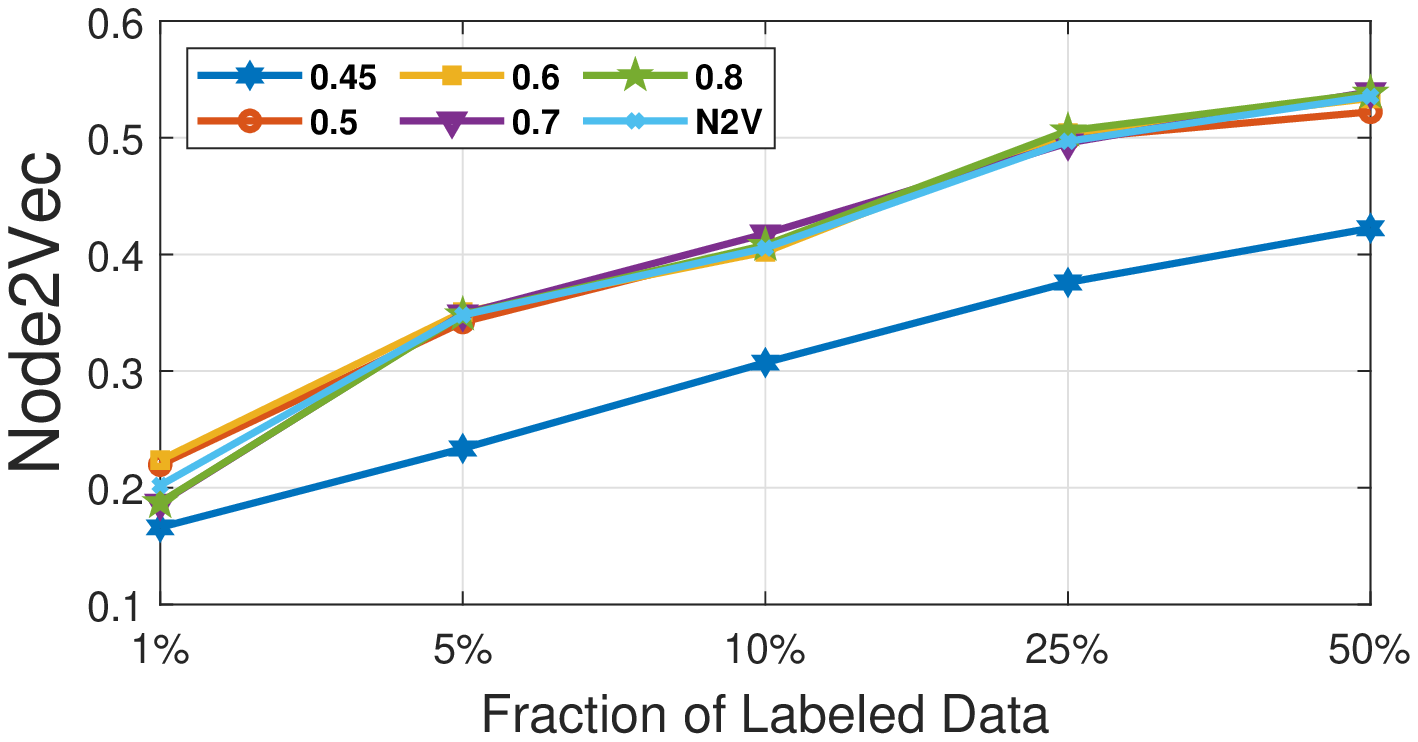}
        \caption{Macro $F_1$}
    \end{subfigure}
~
        \begin{subfigure}[h]{0.32\textwidth}
        \centering
        \includegraphics[width=1.12\textwidth]{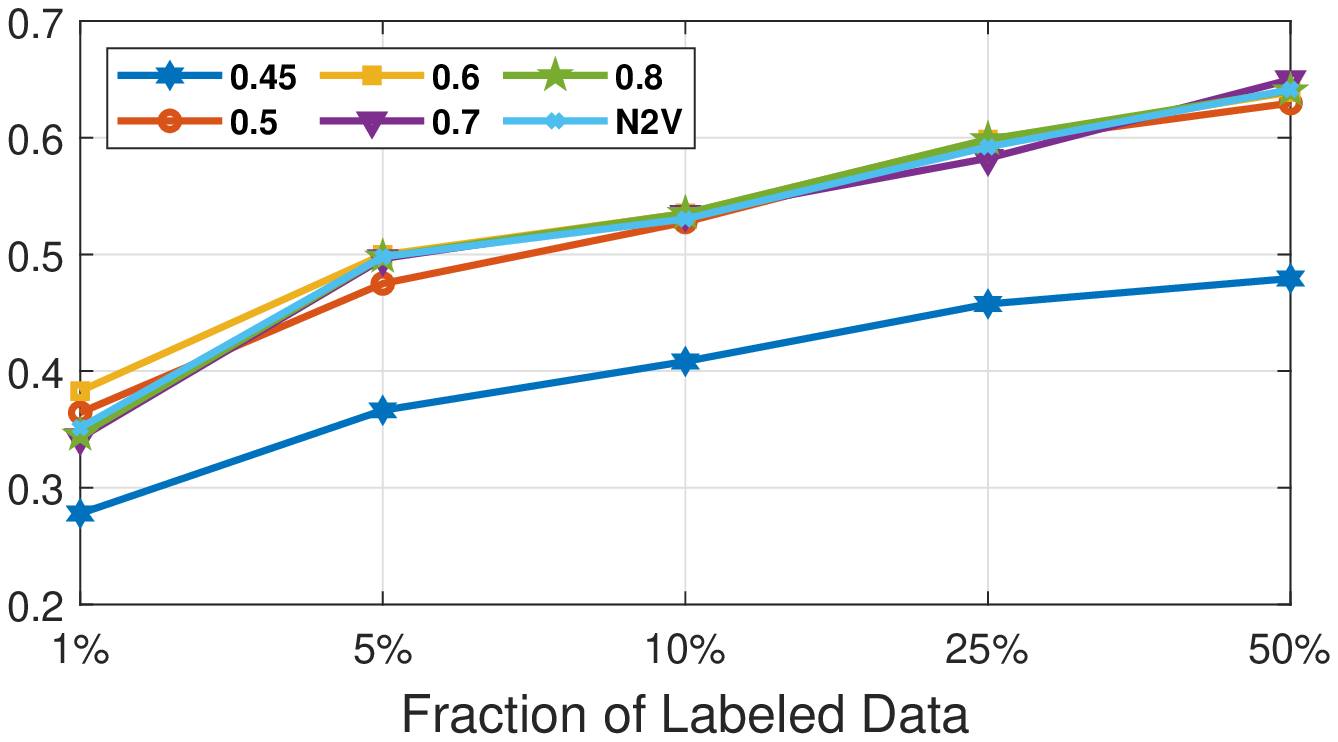}
        \caption{Micro $F_1$}
    \end{subfigure}%
    ~ 
    \begin{subfigure}[h]{0.32\textwidth}
        \centering
        \includegraphics[width=1.12\textwidth]{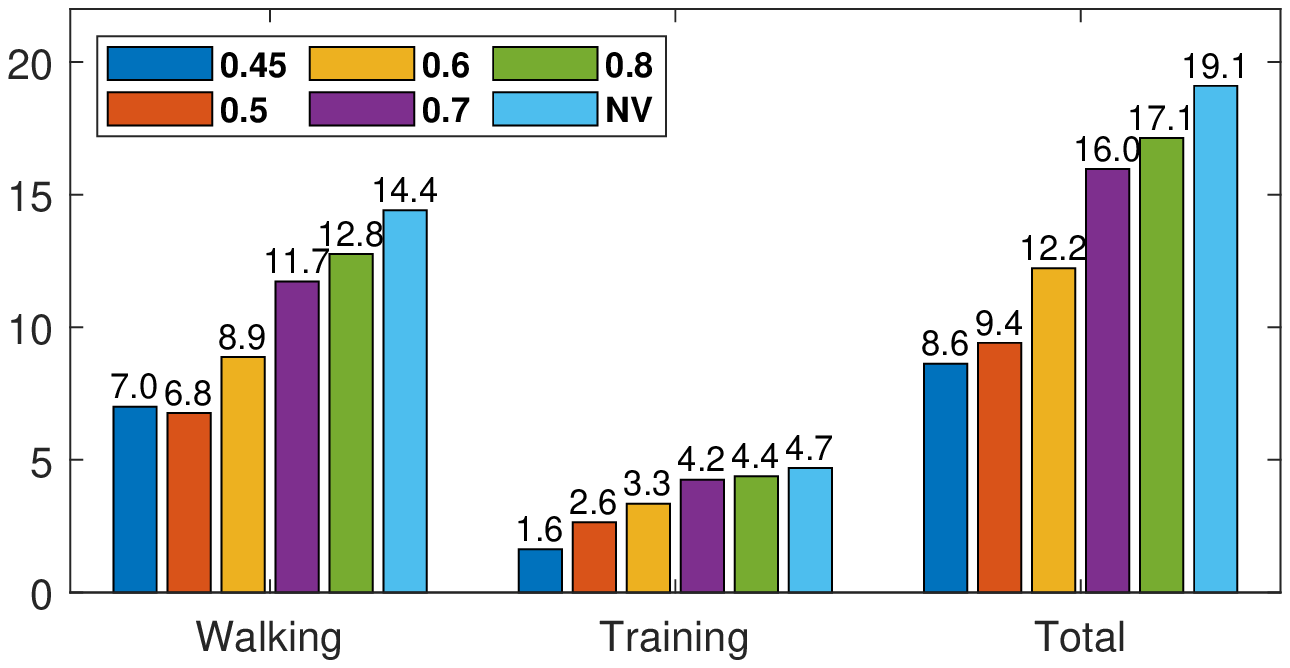}
        \caption{Running Time(s)}
    \end{subfigure}%

    \caption{Detailed single-label classification result on Wiki.}
    \label{fig:wiki}
\end{figure*}

Table~\ref{table:compsin} shows the macro $F_1$ and micro $F_1$ scores, and time for embedding on Cora and Wiki with $5\%$ labeled vertices and $\lambda=0.5$ similarity threshold value. When the similarity threshold $\lambda < 0.5$, graphs become too small and accuracy decrease dramatically. Therefore, we select $\lambda=0.5$ as the cutting point for compression. As we see in the table, for both datasets, while there is no (significant) change on effectiveness as the macro $F_1$ and micro $F_1$ scores, there is a significant gain on efficiency as the total embedding time. While there is around 33.4\% and 37.65\% efficiency improvement on Cora dataset, there is 46\% and 50.7\% efficiency improvement on Wiki when it is compared with base line results, \dw\ and \ntv\ respectively. There is also significant graph compression ratio for both datasets. The size of the graph is decreased significantly with compressing. While the number of vertices is decreased to 1427 from 2708 (47.3\%) for Cora and to 1060 from 2405 (55.9\%) for Wiki, the number of edges is decreased to 5236 from 10858 (51.8\%) for Cora and to 8584 from 23192 (62.9\%) for Wiki.

The detailed comparison between \my\ and the baseline methods with varying the portion of labeled vertices for classification and similarity threshold value $\lambda$ is given in Figure~\ref{fig:core} and Figure~\ref{fig:wiki} for the Cora and Wiki datasets respectively. We report details of embedding time as walking, training and total embedding time separately. As we see from figures, while the macro and micro $F_1$ scores are very similar with or higher than baseline results for $\lambda\geq 0.5$, the running times are significantly different for both datasets. There is an improvement in both walking and training time for embedding. For both datasets, when the similarity threshold $\lambda < 0.5$, the macro $F_1$ and micro $F_1$ scores dramatically decrease since it merges many vertices and edges so this may cause information loss in the graph. 

\subsubsection{Multi-label Classification}
The datasets used in these experiments are multi-labeled, i.e., a node can belong to more than one class. For this task, we train a one-vs-rest logistic regression model with $L_2$ regularization on the graph embeddings for prediction. The logistic regression model is implemented by LibLinear~\cite{fan2008liblinear}.

Table~\ref{table:compare} shows the macro $F_1$ and micro $F_1$ scores, and time for embedding on DBLP and BlogCatalog with 5\% and 50\% labeled vertices respectively and $\lambda=0.5$ similarity threshold value. Similar to single label classification, we select $\lambda=0.5$ as the cutting point for compression. 

As we see in the table, for \textbf{DBLP} dataset, while the macro $F_1$ and micro $F_1$ scores of \my\ are very similar with baseline results, there is a significant gain on embedding time which are 57.46\% and 56.75\% for \dw\ and \ntv\ respectively. There is also a high graph compression ratio for this dataset. While the number of vertices is decreased to 8824 from 29199 (69.8\%), the number of edges is decreased to 32984 from 133664 (75.3\%). 

As a scale-free network with complex structure, \textbf{BlogCatalog} is challenging for graph coarsening. While there is a slight decrease in both macro and micro $F_1$ scores (2.9\% on macro $F_1$ and 5.6\% on micro $F_1$ for \dw\ and 4.1\% on macro $F_1$ and 6.6\% on micro $F_1$ for \ntv), we obtain about 28.2\% and 23.4\% gains in the total running time respectively. Furthermore, we reduce the number of vertices and edges about 17.5\% and 18.6\% percent in the compressed graph respectively.

The detailed comparison between \my\ and the baseline methods with varying the portion of labeled vertices and similarity threshold value $\lambda$ for multi-label classification is given in Figure~\ref{fig:db} and Figure~\ref{fig:bc}. In addition to the macro and micro $F_1$ scores achieved on DBLP and BlogCatalog datasets, we also report detailed embedding time as walking, training and total embedding time separately in Figure~\ref{fig:db}-(c) and Figure~\ref{fig:bc}-(c).

For the \textbf{DBLP} dataset (Figure \ref{fig:db}), as it happens in Cora and Wiki, \my\ has very similar, even slightly higher,\ macro and micro $F_1$ scores than baseline methods for $\lambda \geq 0.5$ at all training ratios, but again the scores decrease dramatically for smaller $\lambda$ values. On the other hand, there is a significant gain in walking, training and total embedding time.

For the \textbf{BlogCatalog} dataset (Figure \ref{fig:bc}), there are similar results as well. Macro $F_1$ scores are close each other for $\lambda\geq 0.5$; however, micro $F_1$ scores are slightly different for $\lambda\leq 0.7$. For the comparison between \my\ and both baseline methods, \dw\ and \ntv, although there is a slight decrease in both macro and micro $F_1$ scores, we obtain gains on the running times, especially on walking times. For \ntv, walking time takes a large portion of the embedding time as a result of thebiased walking. The biggest reason is that, since the degree of vertices is higher, defining a biased probability on them takes longer time. 

In short, for both the single-label and the multi-label classification tasks, \my\ succeeds the \textit{similar classification accuracy} within a consistently \textit{shorter time} and with a \textit{relatively smaller compressed graph}. 
\begin{table*}

\centering{
\caption{ {Performance comparison of the multi-label classification tasks for the similarity threshold $\lambda=0.5$ and training ratio 5\% and 50\% for DBLP and BlogCatalog respectively}}
\label{table:compare}
\begin{tabular}{|c | c c c | c c c |}
 \hline 
  \multirow{2}{*}{$ $} & \multicolumn{3}{|c|}{\cellcolor{gray!60}\textbf{DBLP (5\%)}} & \multicolumn{3}{|c|}{\cellcolor{gray!60}\textbf{BlogCatalog (50\%)}} \\ \cline{2-7}
 
& \cellcolor{gray!25}\textbf{\my(DW)} & \cellcolor{gray!25}\textbf{DW} & \cellcolor{gray!25}\textbf{Gain} \% & \cellcolor{gray!25}\textbf{\my(DW)} & \cellcolor{gray!25}\textbf{DW} & \cellcolor{gray!25}\textbf{Gain} \% \\ \hline
\cellcolor{gray!25}\textbf{Macro} $\mathbf{F_1}$ &  0.625 & 0.622 & 0.51 & 0.243 & 0.250 & -2.92 \\
\cellcolor{gray!25}\textbf{Micro} $\mathbf{F_1}$ &  0.657 & 0.653 & 0.63 & 0.369 & 0.391 & -5.68 \\
\cellcolor{gray!25}\textbf{Time(s)} & 39.97 & 93.96 & \textbf{57.46} & 71.7 & 99.3 & \textbf{27.79}\\ \hline

& \cellcolor{gray!25}\textbf{\my(N2V)} & \cellcolor{gray!25}\textbf{N2V} & \cellcolor{gray!25}\textbf{Gain} \% & \cellcolor{gray!25}\textbf{\my(N2V)} & \cellcolor{gray!25}\textbf{N2V} & \cellcolor{gray!25}\textbf{Gain} \%\\ \hline
 \cellcolor{gray!25}\textbf{Macro} $\mathbf{F_1}$ & 0.626 & 0.625 & 0.13 & 0.251 & 0.262 & -4.11 \\
\cellcolor{gray!25}\textbf{Micro} $\mathbf{F_1}$ & 0.658 & 0.657 & 0.24 & 0.368 & 0.396 & -6.68 \\
\cellcolor{gray!25}\textbf{Time(s)} & 75.81 & 175.31 & \textbf{56.75} & 1247.14 & 1628.59 & \textbf{23.42}\\ \hline
 & \cellcolor{gray!25}\textbf{Compressed} & \cellcolor{gray!25}\textbf{Original}& \cellcolor{gray!25}\textbf{Gain} \%  & \cellcolor{gray!25}\textbf{Compressed} & \cellcolor{gray!25}\textbf{Original}& \cellcolor{gray!25}\textbf{Gain} \% \\ \hline
\cellcolor{gray!25}$\mathbf{|V|}$ &8824 & 32984 & \textbf{69.78}  & 8507 &	543872 & \textbf{17.50}\\
\cellcolor{gray!25}$\mathbf{|E|}$ & 32984	& 133664 & \textbf{75.32} &	10312 &	667966 & \textbf{18.58} \\ \hline
\end{tabular}}
\end{table*}

\vspace{3mm}
\begin{figure*}
    \centering
    \begin{subfigure}[h]{0.32\textwidth}
        \centering
        \includegraphics[width=1.12\textwidth]{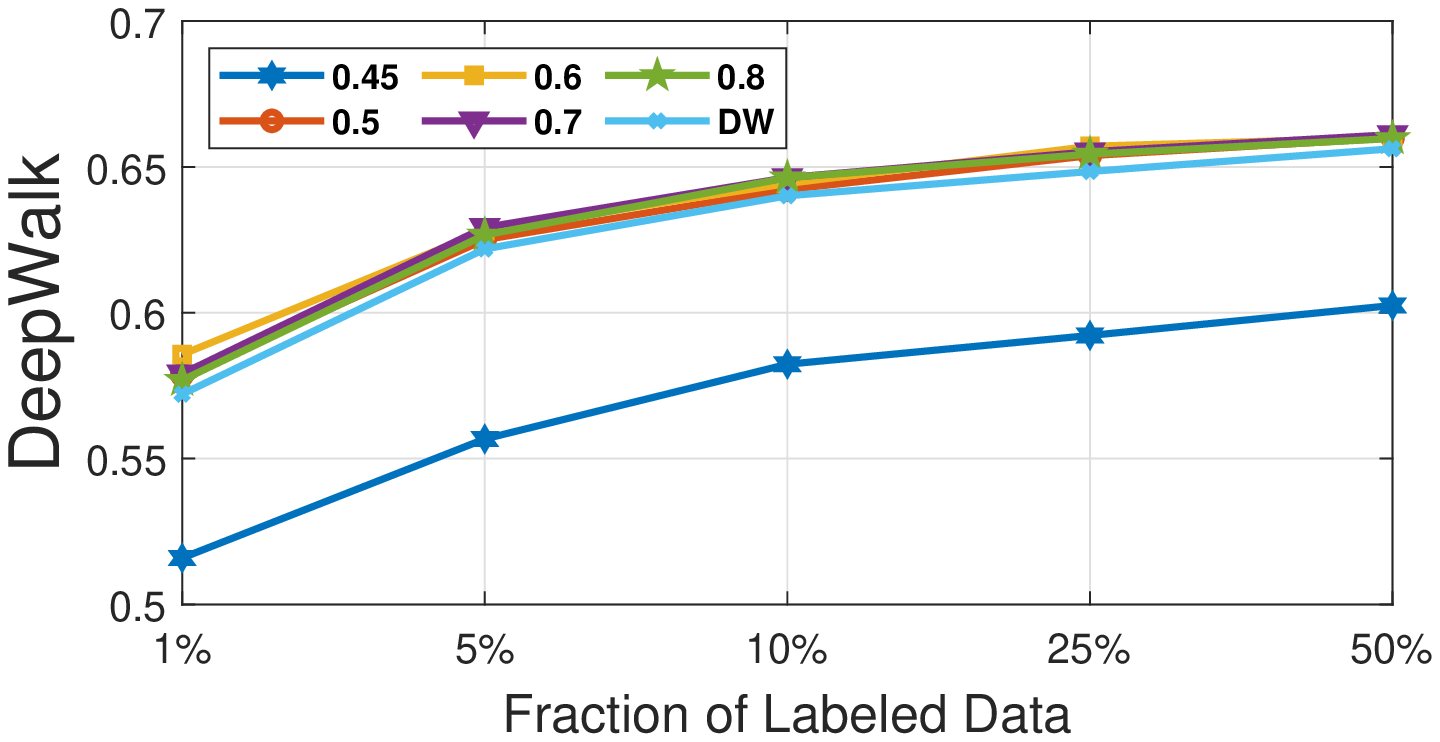}
    \end{subfigure}%
    ~ 
    \begin{subfigure}[h]{0.32\textwidth}
        \centering
        \includegraphics[width=1.12\textwidth]{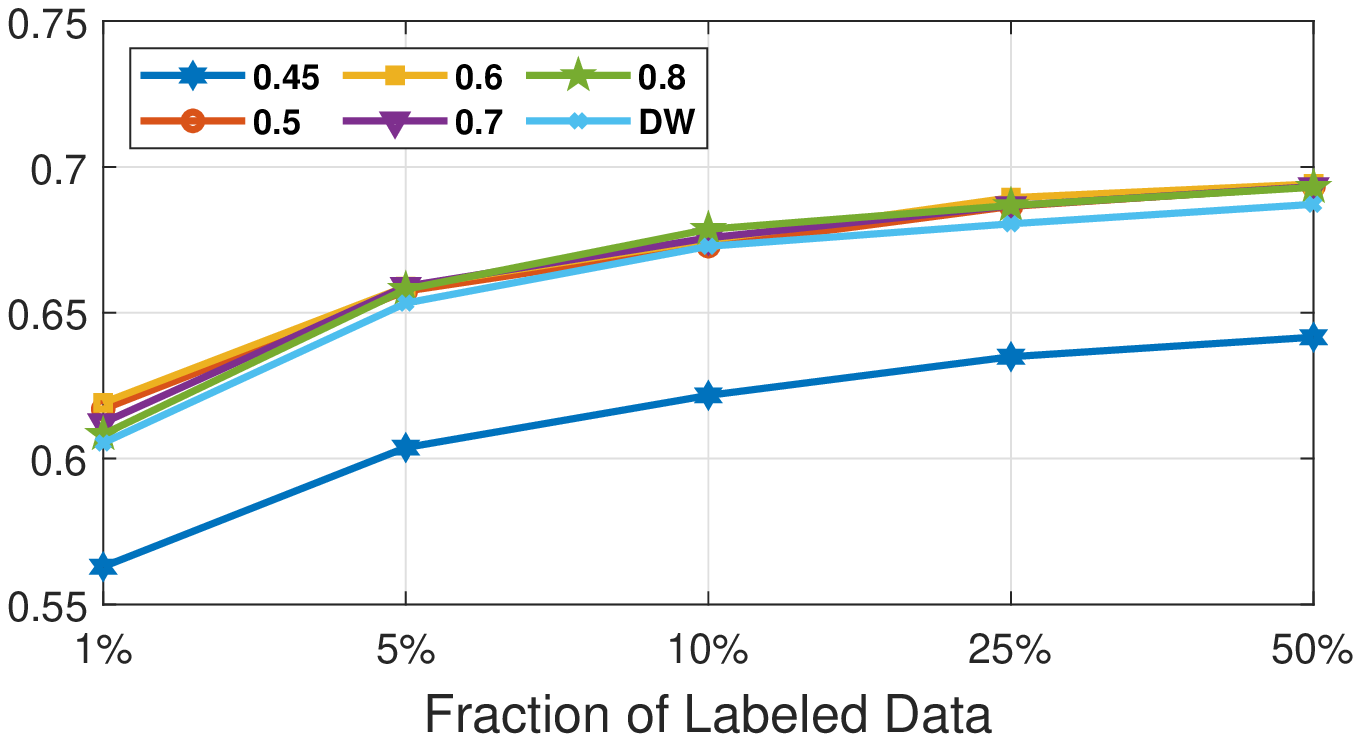}
    \end{subfigure}
        ~ 
    \begin{subfigure}[h]{0.33\textwidth}
        \centering
        \includegraphics[width=1.12\textwidth]{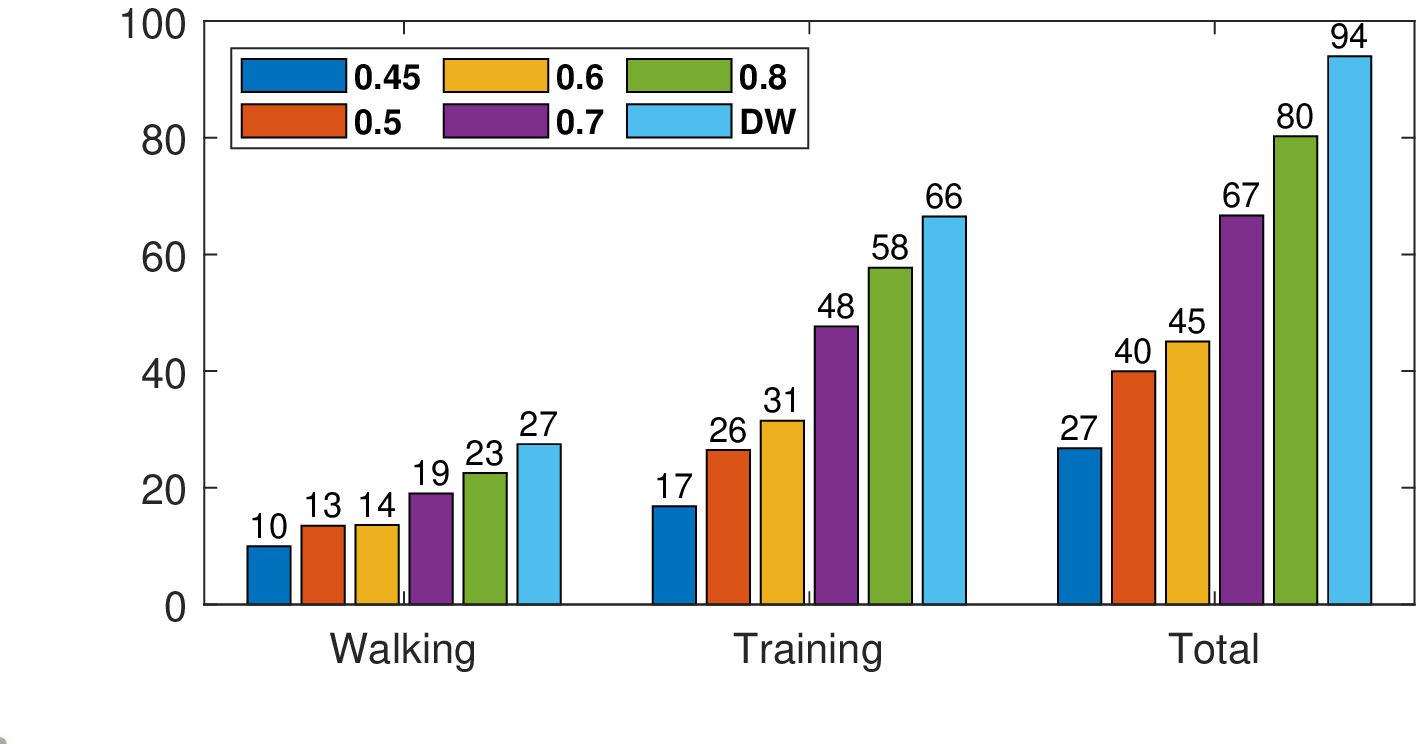}
    \end{subfigure}
~
    \begin{subfigure}[h]{0.32\textwidth}
        \centering
        \includegraphics[width=1.12\textwidth]{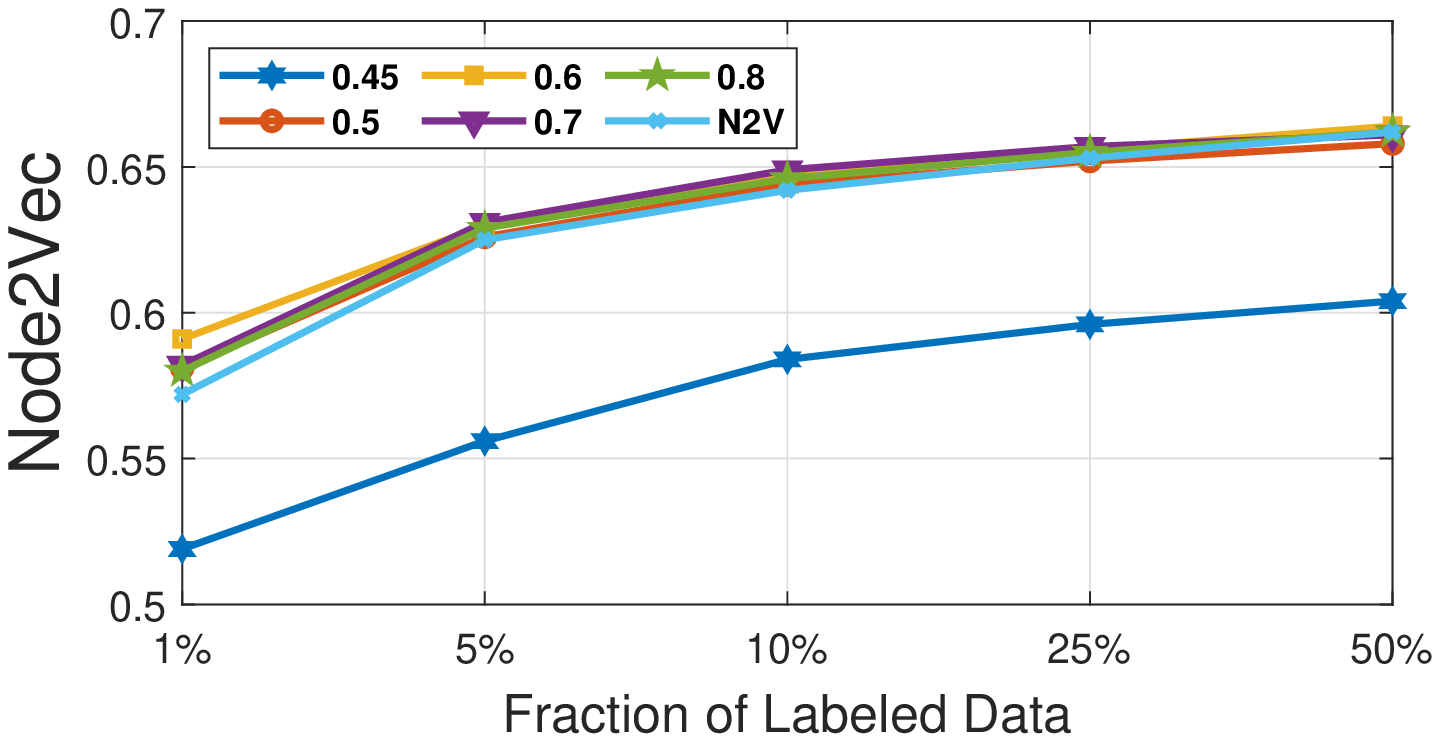}
        \caption{Macro $F_1$}
    \end{subfigure}%
    ~ 
    \begin{subfigure}[h]{0.32\textwidth}
        \centering
        \includegraphics[width=1.12\textwidth]{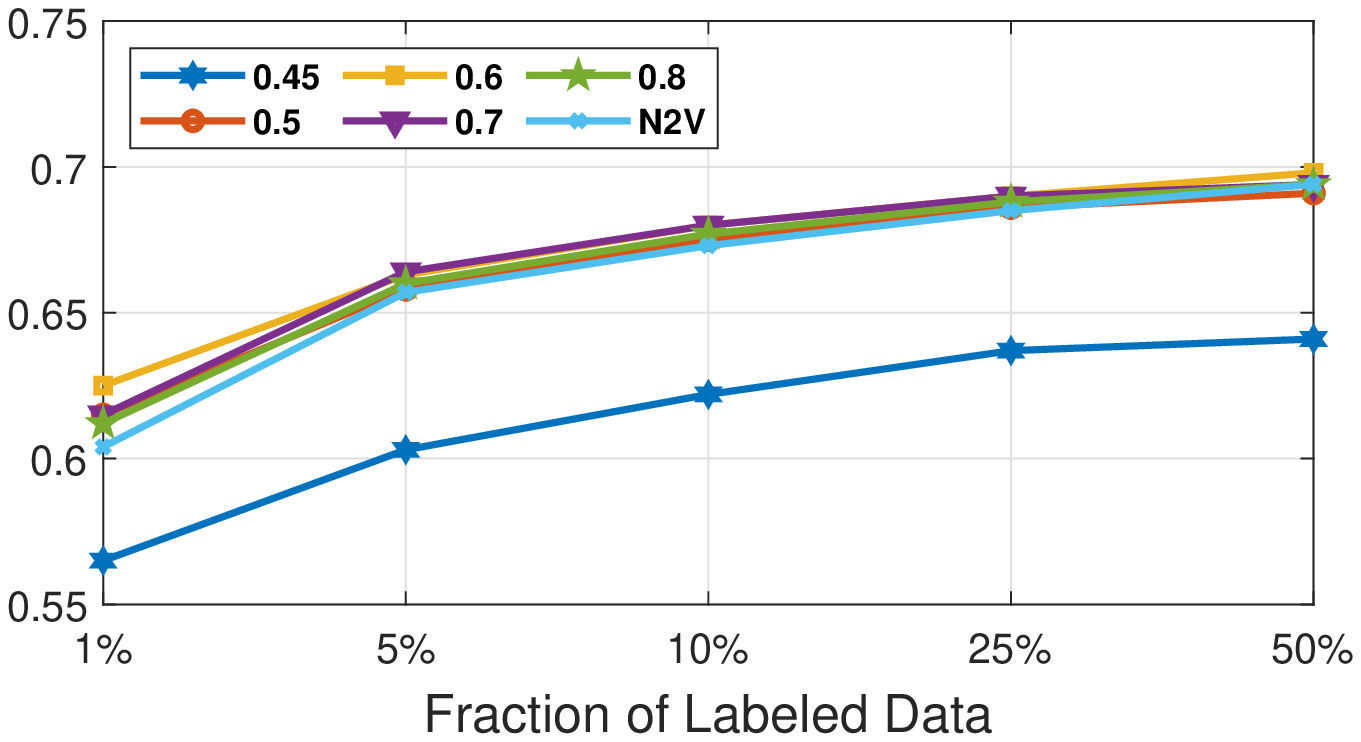}
        \caption{Micro $F_1$}
    \end{subfigure}
        ~ 
    \begin{subfigure}[h]{0.33\textwidth}
        \centering
        \includegraphics[width=1.12\textwidth]{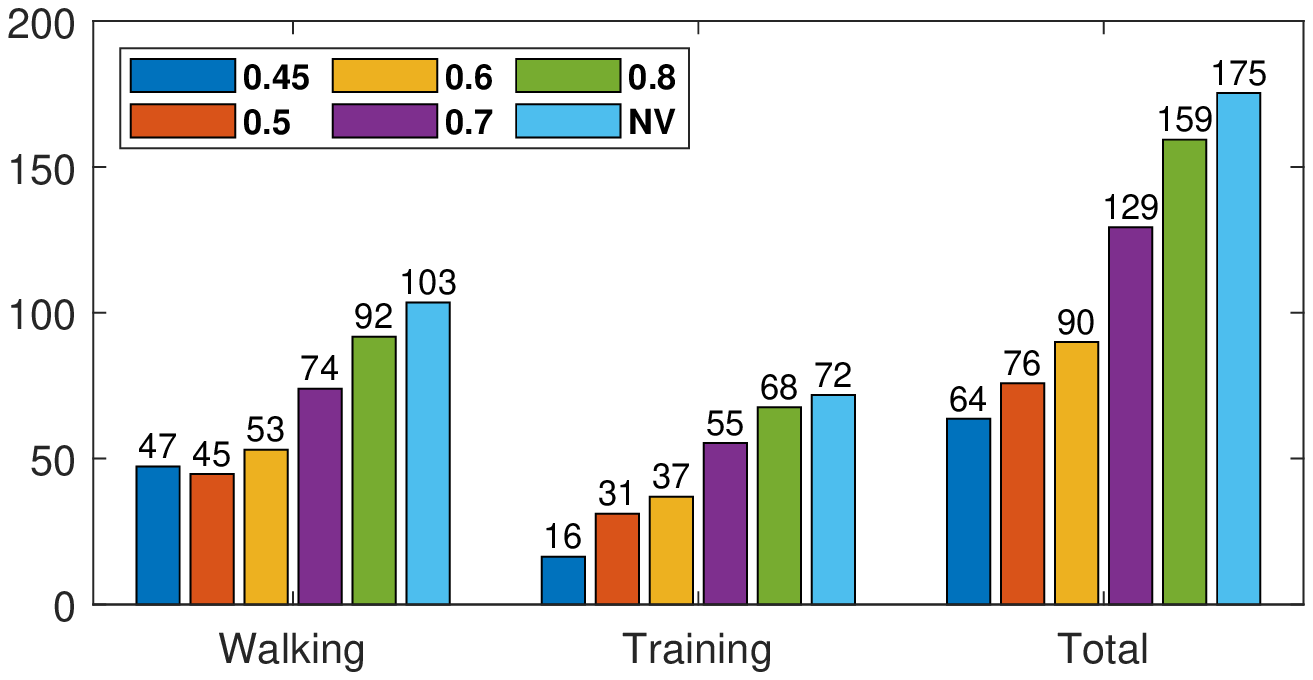}
        \caption{Running Time(s)}
    \end{subfigure}
    
    \caption{Detailed multi-label classification result on DBLP}
    \label{fig:db}
\end{figure*}

\begin{figure*}
    \centering
    \begin{subfigure}[h]{0.32\textwidth}
        \centering
        \includegraphics[width=1.12\textwidth]{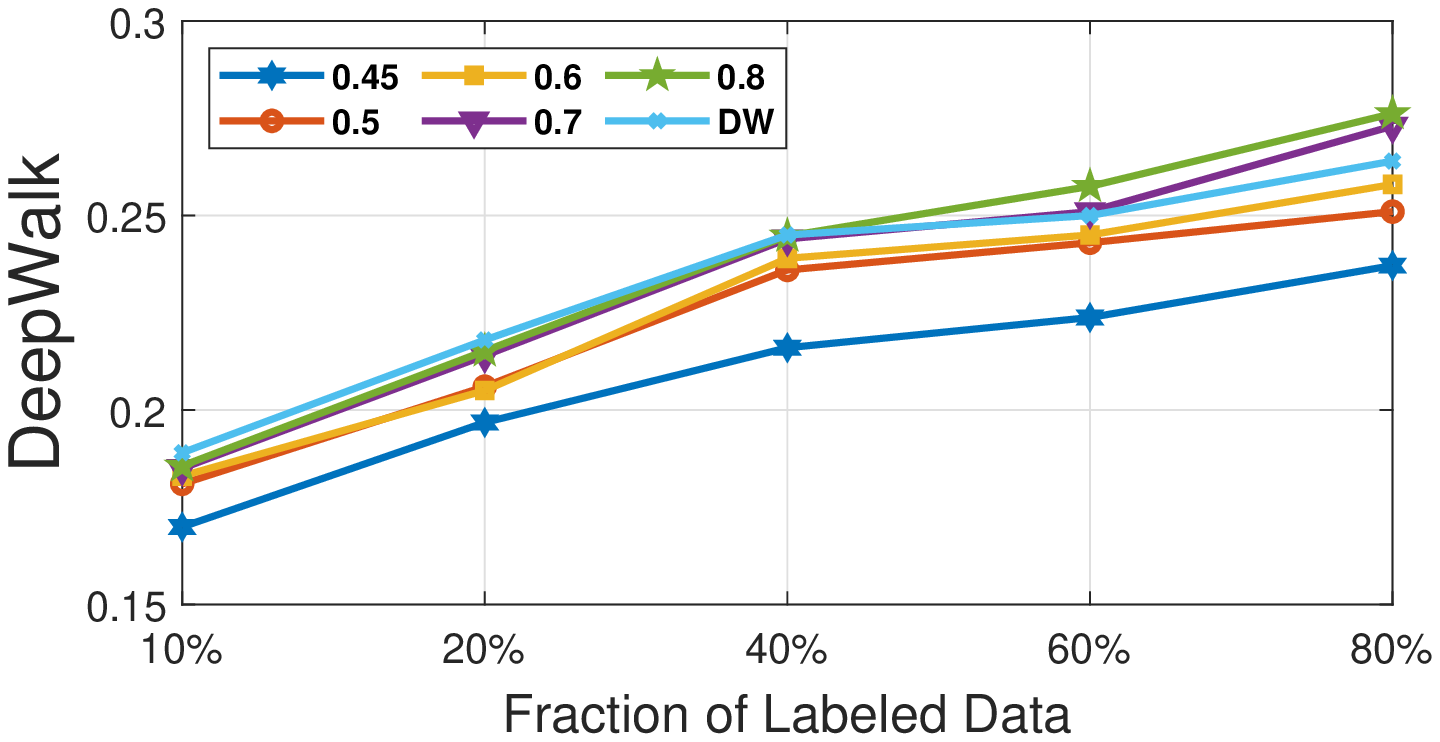}
    \end{subfigure}%
    ~ 
    \begin{subfigure}[h]{0.32\textwidth}
        \centering
        \includegraphics[width=1.12\textwidth]{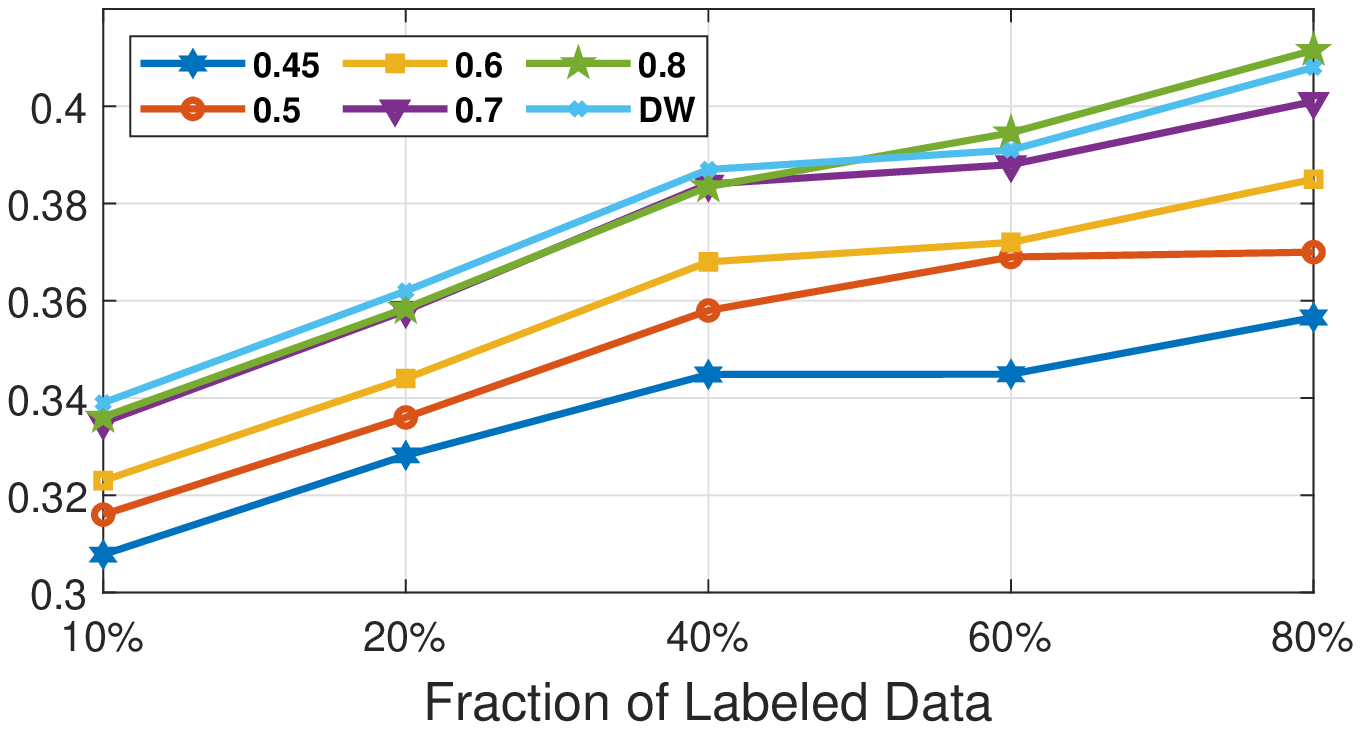}
    \end{subfigure}
        ~ 
    \begin{subfigure}[h]{0.33\textwidth}
        \centering
        \includegraphics[width=1.12\textwidth]{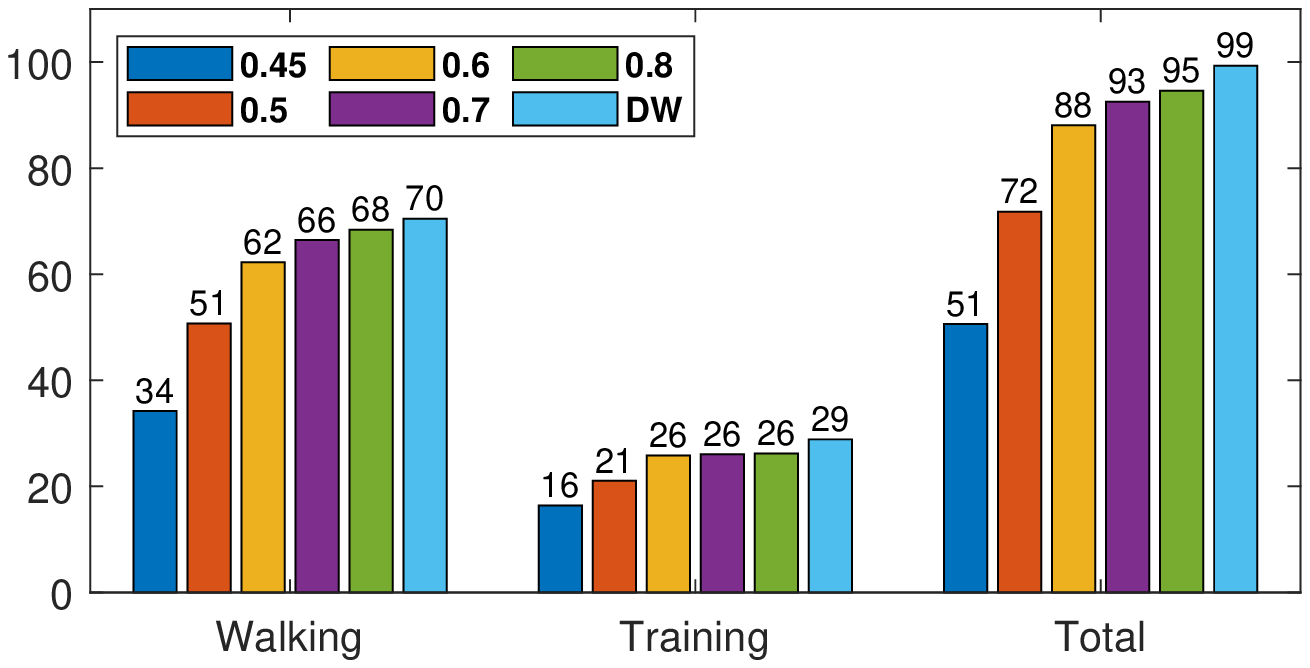}
    \end{subfigure}
~
    \begin{subfigure}[h]{0.32\textwidth}
        \centering
        \includegraphics[width=1.12\textwidth]{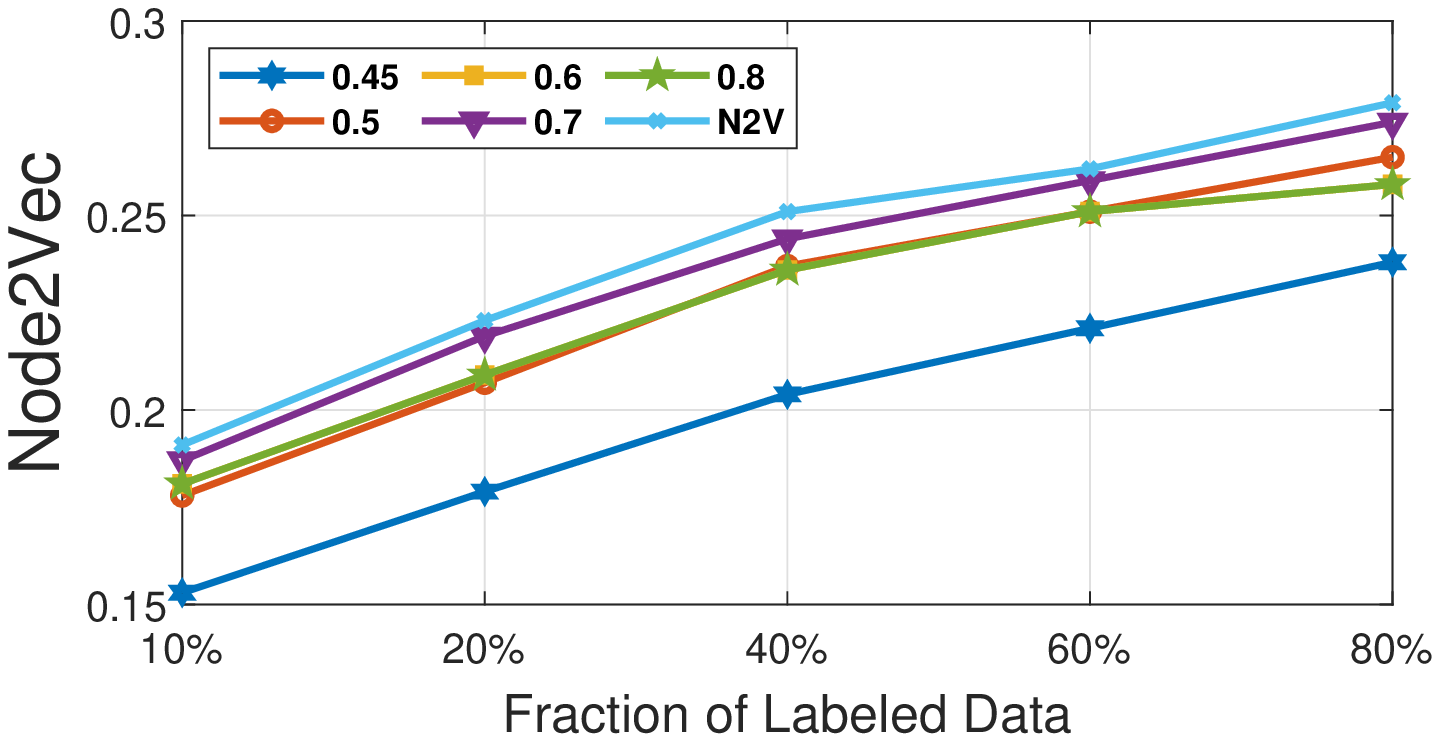}
        \caption{Macro $F_1$}
    \end{subfigure}%
    ~ 
    \begin{subfigure}[h]{0.32\textwidth}
        \centering
        \includegraphics[width=1.12\textwidth]{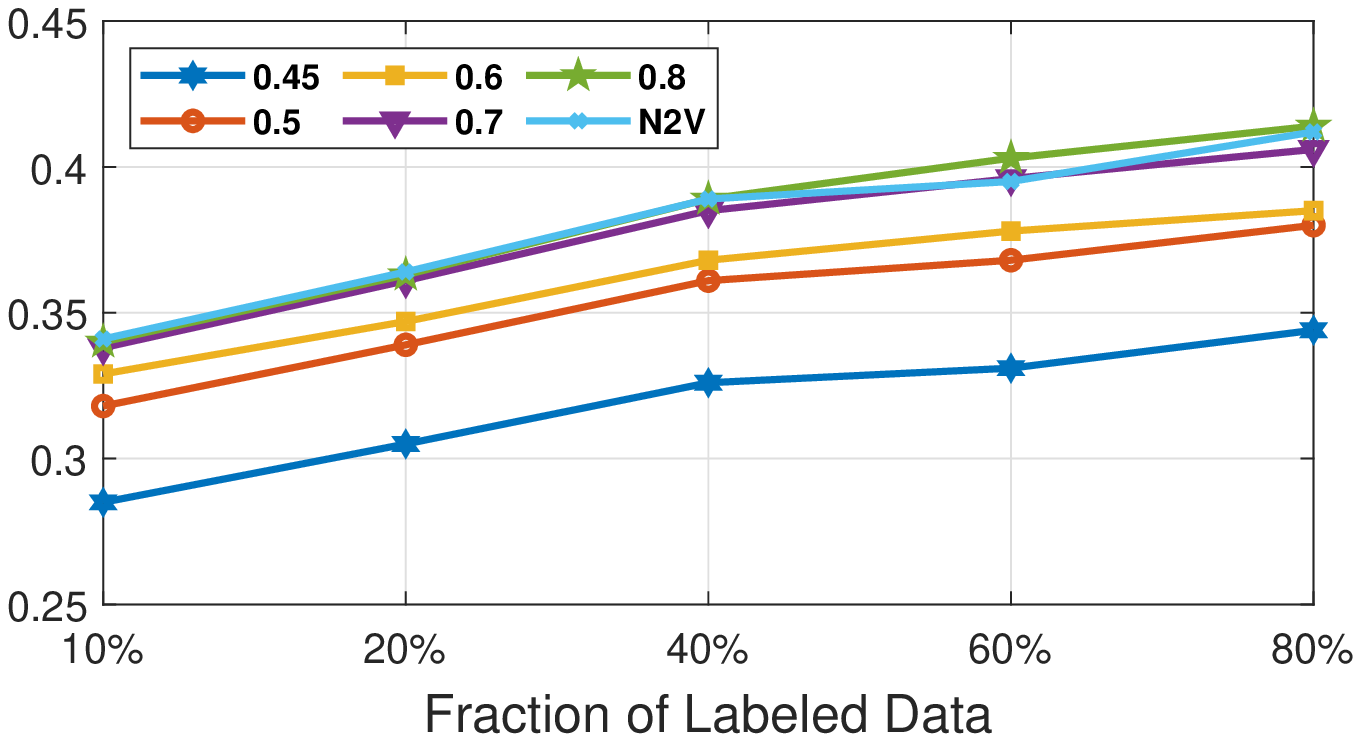}
        \caption{Micro $F_1$}
    \end{subfigure}
        ~ 
    \begin{subfigure}[h]{0.33\textwidth}
        \centering
        \includegraphics[width=1.12\textwidth]{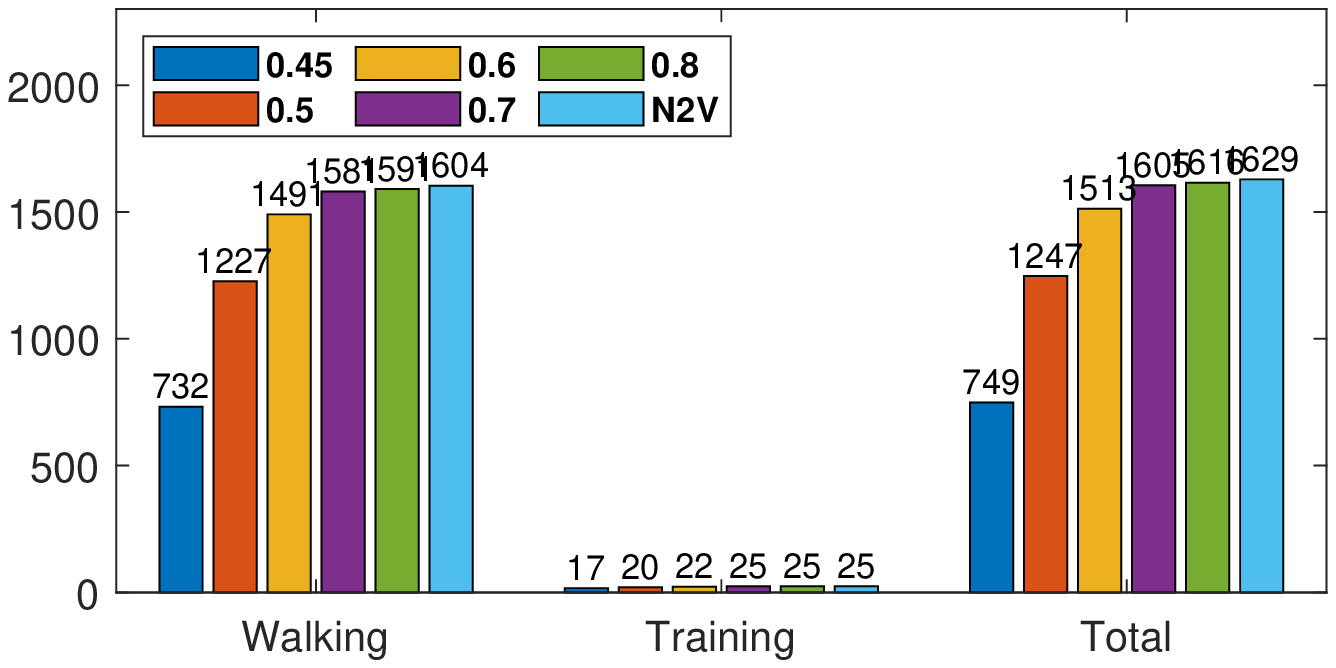}
        \caption{Running Time(s)}
    \end{subfigure}
    
    \caption{Detailed multi-label classification result on BlogCatolog}
    \label{fig:bc}
\end{figure*}

\subsection{Graph Compression}

\begin{figure*}[t!]
    \centering
    \begin{subfigure}[t]{0.23\textwidth}
        \centering
        \includegraphics[width=1.12\textwidth]{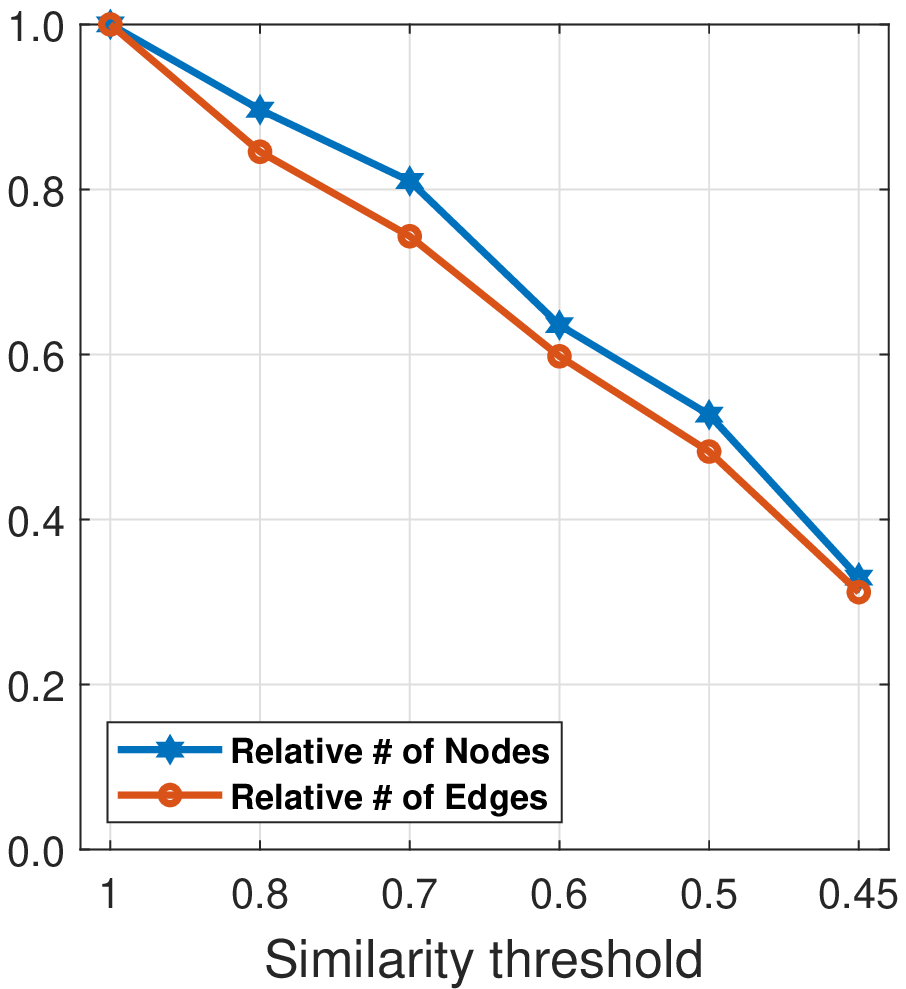}
        \caption{Cora}
    \end{subfigure}%
    ~ 
    \begin{subfigure}[t]{0.23\textwidth}
        \centering
        \includegraphics[width=1.12\textwidth]{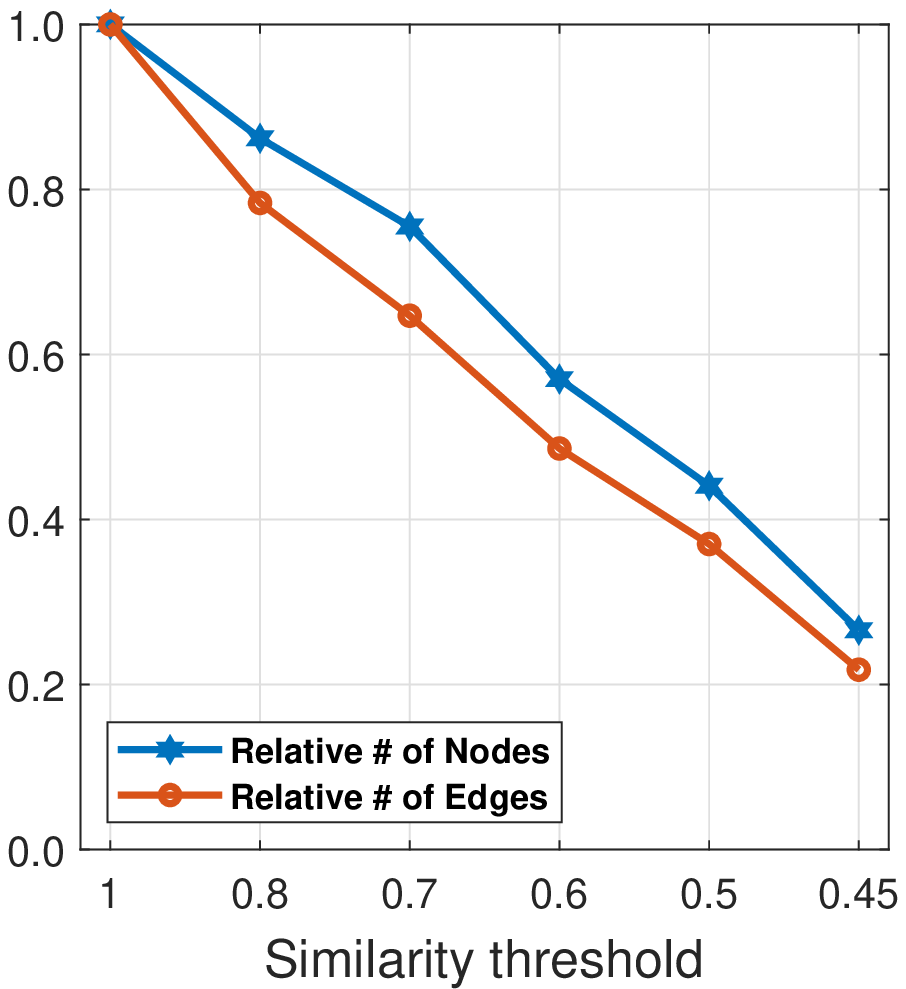}
        \caption{Wiki}
    \end{subfigure}
        ~ 
    \begin{subfigure}[t]{0.23\textwidth}
        \centering
        \includegraphics[width=1.12\textwidth]{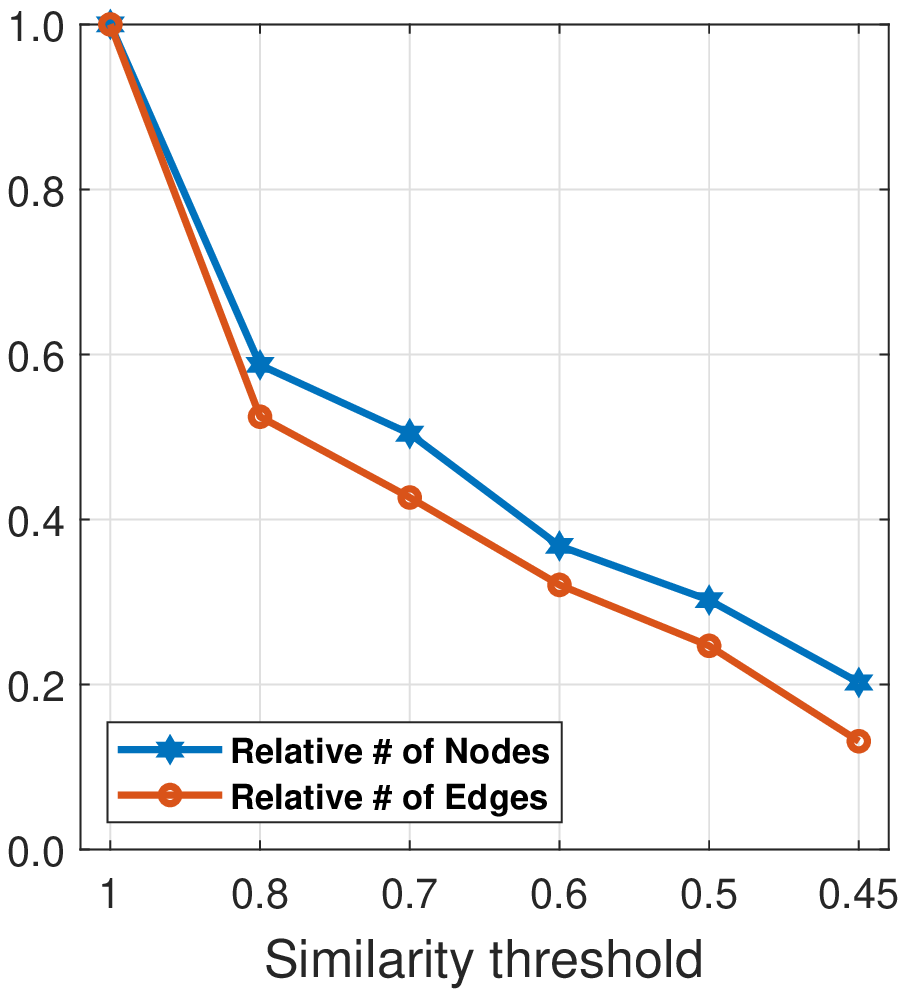}
        \caption{DBLP}
    \end{subfigure}
            ~ 
    \begin{subfigure}[t]{0.23\textwidth}
        \centering
        \includegraphics[width=1.12\textwidth]{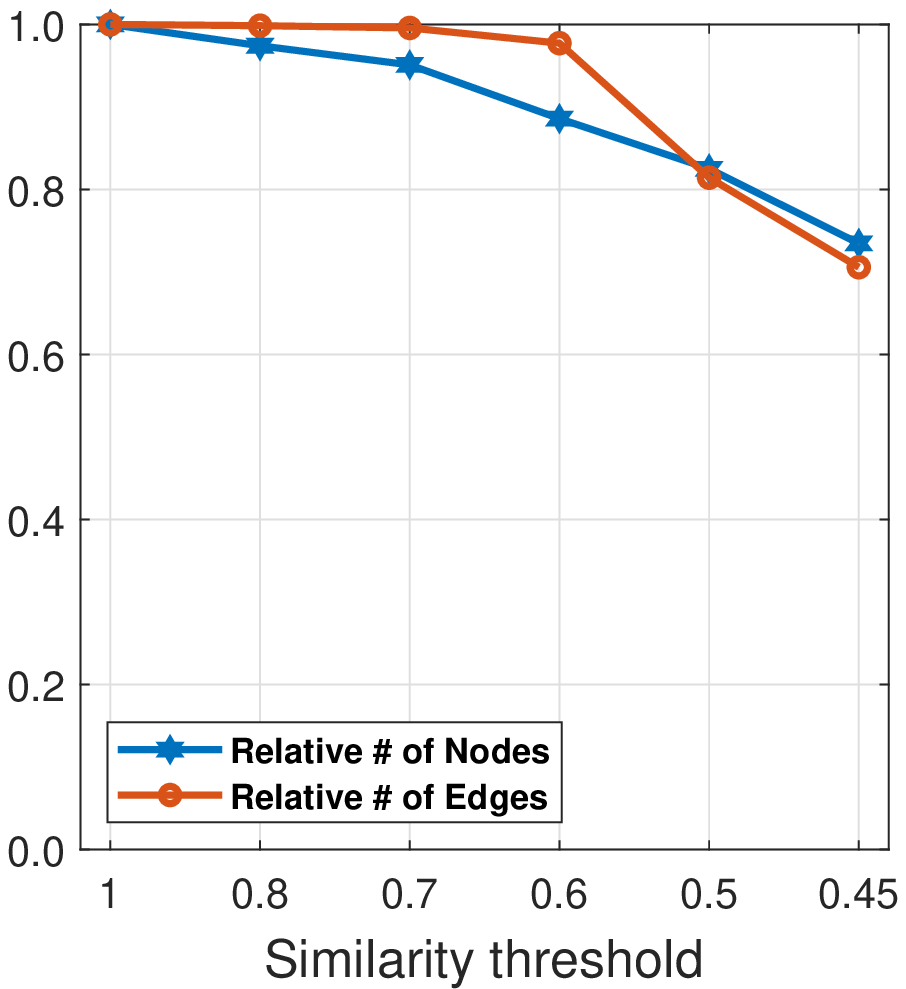}
        \caption{BlogCatalog}
    \end{subfigure}
    \caption{The ratio of vertices/edges of the compressed graphs to that of the original graphs.}
    \label{fig:cr}
\end{figure*}

In this section, we present how the graph size is decreased by compression with different similarity threshold values $\lambda$. As we see in Figure~\ref{fig:cr}, there is a linear relation between $\lambda$ and the number of vertices and edges till $\lambda=0.5$, and then graph sizes change dramatically for smaller $\lambda$ for Cora, Wiki and DBLP datasets, but the decrease is slow for BlogCatalog until $\lambda=0.7$. One of the possible reasons for BlogCatalog is the fact that the sizes of the neighbor sets for some vertices are very large, and it is not easy to get higher similarity for a larger set. For example for two vertices with 15 edges, 10 common neighbors can be considered to have higher similarity. On the other hand, two vertices with 150 edges, we should have 100 common neighbors to get the same similarity value which is not very common. 

\section{Related Work}
In this section, we briefly discuss the related work in the areas of networks embedding and graph compression.
\textbf{Network embedding.} Previous researchers consider the graph embedding as a dimensionality reduction~\cite{chen2018tutorial} such as PCA~\cite{wold1987principal} that captures linear structural information and LE (locally linear embeddings)~\cite{roweis2000nonlinear} that preserves the global structure of non-linear manifolds. While these methods are effective on small graphs, scalability is the major concern for them to be applied on large-scale networks with billions of vertices, since the time complexity of these methods is at least quadratic in the number of graph vertices~\cite{Zhang2017,wang2018graphgan}. On the other hand, recent approaches in graph representation learning focus on the scalable methods that use matrix factorization~\cite{qiu2018network} or neural networks~\cite{tang2015line,cao2016deep,Ying:2018:HGR}. Many of these aim to preserve the first and second order proximity as local neighborhood with path sampling using short random walks such as \dw\ and \ntv~\cite{hamilton2017representation,goyal2018graph,cai2018comprehensive,cui2018survey}. Some studies use network embedding on node and graph classification~\cite{deepwalk,chen2018harp,Niepert:2016:LCN}, some of them use it on graph clustering~\cite{Akbas2019,Akbas:2017:AGC,Cao:2015:grarep}.

\dw\ preserves the higher order proximity between vertices by generating random walks of fixed length from all the vertices of a graph. With considering the walks as sentences in a language model, they optimize the log-likelihood of random walks using the Skip-gram model~\cite{mikolov2013efficient}, which is for learning word embeddings. \dw\ uses hierarchical softmax for the efficiency of optimization. \ntv, which is from the many different extensions of \dw, makes an improvement to the random walk phase in \dw. They apply biased random walks using the return parameter $p$ and the in-out parameter $q$ to combine DFS-like and BFS-like neighborhood explorations. With this way, they preserve the network community and structural roles of vertices. Different than \dw, \ntv\ uses negative sampling for optimization.
 
Optimization in these methods could easily get stuck at a bad local minima as the result of poor initialization. Moreover, while preserving local proximities of vertices in a network, they may not preserve the global structure of the network. To address these issues, a multilevel graph representation learning paradigm, HARP, is proposed in \cite{chen2018harp} as a graph preprocessing step. In this approach, in a hierarchical manner at varying levels of coarseness, related vertices in the network are combined into super-nodes. After learning the embedding of the coarsened network with a state-of-the-art graph embedding method, the learned embedding is used as an initial value for the next level. In addition to capturing the global structure of the input graph by coalescing, by learning graph representation on these smaller graphs, a good initialization with the embedding of the coarsened network improves performance of  the state-of-the-art methods.

\my\ use the graph coarsening to capture the local structure of the network without hierarchical manner to improve the efficiency of the random walk based state-of-the-art methods.

\textbf{Graph compressing.} Although recent network embedding methods have a promising performance on the effectiveness of various applications, there are still some challenges since real-world graphs are massive in scale and this may obstruct the direct application of existing methods. On the other hand, when we consider a compressed or summary graph conserving the key structure and patterns of the original graph, many methods would be applicable to large graphs~\cite{Liu_2018}.

Graph compressing algorithms, which are popular
methods in the graph mining community, compress a graph into a smaller one with preserving certain properties of the original graph, such as connectivity ~\cite{zhou_comp}. Vertices with similar characteristics are grouped and represented by super-nodes. Approximations with compressing are used to solve the original problem more efficiently such as all-pairs shortest paths, search engine storage and retrieval~\cite{adler2001towards, suel_gc01}. Using an approximation of the original graph not only make a complex problem simpler but also make a good initialization to solve the problem. It has been proved successful in various graph theory problems~\cite{gilbert2004compressing}.

\my\ extends the idea of the graph compressing layout to network representation learning methods. We illustrate the utility of this paradigm by combining \my\ with two state-of-the-art representation learning methods, \dw\ and \ntv.

\section{CONCLUSIONS}
We propose a novel efficient network embedding method \my\ which preserves the local structural features of the vertices. To overcome the efficiency limitations of the state-of-the-art methods, we use the idea of the graph compressing layout to network representation learning methods. We combine related vertices of a network into super-nodes which preserve the neighborhood information of the vertices. Then, we use the compressed graph to learn the representation of the vertices in the original graph. We apply the utility of this paradigm by combining \my\ with two state-of-the-art representation learning methods, \dw\ and \ntv. Extensive experiments on a variety of different real-world graphs validate the efficiency of our approach on challenging multi-class and multi-label classification tasks without decreasing the effectiveness.

One of the future extensions of \my\ could be combining it with other kinds of graph representation learning methods which use matrix factorization and deep neural networks to see if it also works well with them. Another extension we are planing is using different similarity measures for compression to preserve different properties of the network. 

\bibliographystyle{abbrv}
\bibliography{vldb_sample}  

\begin{thebibliography}{10}

\bibitem{adler2001towards}
M.~Adler and M.~Mitzenmacher.
\newblock Towards compressing web graphs.
\newblock In {\em Proceedings of Data Compression Conference, DCC 2001.}, pages
  203--212. IEEE, 2001.

\bibitem{Akbas:2017:AGC}
E.~Akbas and P.~Zhao.
\newblock Attributed graph clustering: An attribute-aware graph embedding
  approach.
\newblock In {\em Proceedings of the 2017 IEEE/ACM International Conference on
  Advances in Social Networks Analysis and Mining 2017}, ASONAM '17, pages
  305--308, New York, NY, USA, 2017. ACM.

\bibitem{Akbas2019}
E.~Akbas and P.~Zhao.
\newblock {\em Graph Clustering Based on Attribute-Aware Graph Embedding},
  pages 109--131.
\newblock Springer International Publishing, Cham, 2019.

\bibitem{Belkin_lap}
M.~Belkin and P.~Niyogi.
\newblock Laplacian eigenmaps and spectral techniques for embedding and
  clustering.
\newblock In {\em Proceedings of the 14th International Conference on Neural
  Information Processing Systems: Natural and Synthetic}, NIPS'01, pages
  585--591, Cambridge, MA, USA, 2001. MIT Press.

\bibitem{bhagat2011node}
S.~Bhagat, G.~Cormode, and S.~Muthukrishnan.
\newblock Node classification in social networks.
\newblock In {\em Social network data analytics}, pages 115--148. Springer,
  2011.

\bibitem{cai2018comprehensive}
H.~Cai, V.~W. Zheng, and K.~C.-C. Chang.
\newblock A comprehensive survey of graph embedding: Problems, techniques, and
  applications.
\newblock {\em IEEE Transactions on Knowledge and Data Engineering},
  30(9):1616--1637, 2018.

\bibitem{Cao:2015:grarep}
S.~Cao, W.~Lu, and Q.~Xu.
\newblock Grarep: Learning graph representations with global structural
  information.
\newblock In {\em Proceedings of the 24th ACM International on Conference on
  Information and Knowledge Management}, CIKM '15, pages 891--900, New York,
  NY, USA, 2015. ACM.

\bibitem{cao2016deep}
S.~Cao, W.~Lu, and Q.~Xu.
\newblock Deep neural networks for learning graph representations.
\newblock In {\em Thirtieth AAAI Conference on Artificial Intelligence}, 2016.

\bibitem{chen2018tutorial}
H.~Chen, B.~Perozzi, R.~Al-Rfou, and S.~Skiena.
\newblock A tutorial on network embeddings.
\newblock {\em arXiv preprint arXiv:1808.02590}, 2018.

\bibitem{chen2018harp}
H.~Chen, B.~Perozzi, Y.~Hu, and S.~Skiena.
\newblock Harp: Hierarchical representation learning for networks.
\newblock In {\em Thirty-Second AAAI Conference on Artificial Intelligence},
  2018.

\bibitem{cui2018survey}
P.~Cui, X.~Wang, J.~Pei, and W.~Zhu.
\newblock A survey on network embedding.
\newblock {\em IEEE Transactions on Knowledge and Data Engineering}, 2018.

\bibitem{fan2008liblinear}
R.-E. Fan, K.-W. Chang, C.-J. Hsieh, X.-R. Wang, and C.-J. Lin.
\newblock Liblinear: A library for large linear classification.
\newblock {\em Journal of machine learning research}, 9(Aug):1871--1874, 2008.

\bibitem{gilbert2004compressing}
A.~C. Gilbert and K.~Levchenko.
\newblock Compressing network graphs.
\newblock In {\em Proceedings of the LinkKDD workshop at the 10th ACM
  Conference on KDD, International conference on Knowledge discovery and data
  mining}, volume 124, 2004.

\bibitem{Gligor_pro}
V.~Gligorijevi{\'c}, M.~Barot, and R.~Bonneau.
\newblock deepnf: Deep network fusion for protein function prediction.
\newblock {\em bioRxiv}, 2017.

\bibitem{goyal2018graph}
P.~Goyal and E.~Ferrara.
\newblock Graph embedding techniques, applications, and performance: A survey.
\newblock {\em Knowledge-Based Systems}, 151:78--94, 2018.

\bibitem{node2vec}
A.~Grover and J.~Leskovec.
\newblock node2vec: Scalable feature learning for networks.
\newblock In {\em Proceedings of the 22nd ACM SIGKDD international conference
  on Knowledge discovery and data mining}, pages 855--864. ACM, 2016.

\bibitem{hamilton2017representation}
W.~L. Hamilton, R.~Ying, and J.~Leskovec.
\newblock Representation learning on graphs: Methods and applications.
\newblock {\em arXiv preprint arXiv:1709.05584}, 2017.

\bibitem{letovsky2003predicting}
S.~Letovsky and S.~Kasif.
\newblock Predicting protein function from protein/protein interaction data: a
  probabilistic approach.
\newblock {\em Bioinformatics}, 19(suppl\_1):i197--i204, 2003.

\bibitem{Liu_2018}
Y.~Liu, T.~Safavi, A.~Dighe, and D.~Koutra.
\newblock Graph summarization methods and applications: A survey.
\newblock {\em ACM Comput. Surv.}, 51(3):62:1--62:34, June 2018.

\bibitem{lopes2010collaboration}
G.~R. Lopes, M.~M. Moro, L.~K. Wives, and J.~P.~M. De~Oliveira.
\newblock Collaboration recommendation on academic social networks.
\newblock In {\em International conference on conceptual modeling}, pages
  190--199. Springer, 2010.

\bibitem{mikolov2013efficient}
T.~Mikolov, K.~Chen, G.~Corrado, and J.~Dean.
\newblock Efficient estimation of word representations in vector space.
\newblock In {\em Proceedings of Workshop at ICLR, 2013.}, 2013.

\bibitem{mikolov2013distributed}
T.~Mikolov, I.~Sutskever, K.~Chen, G.~S. Corrado, and J.~Dean.
\newblock Distributed representations of words and phrases and their
  compositionality.
\newblock In {\em Advances in neural information processing systems}, pages
  3111--3119, 2013.

\bibitem{Niepert:2016:LCN}
M.~Niepert, M.~Ahmed, and K.~Kutzkov.
\newblock Learning convolutional neural networks for graphs.
\newblock In {\em Proceedings of the 33rd International Conference on
  International Conference on Machine Learning - Volume 48}, ICML'16, pages
  2014--2023. JMLR.org, 2016.

\bibitem{deepwalk}
B.~Perozzi, R.~Al-Rfou, and S.~Skiena.
\newblock Deepwalk: Online learning of social representations.
\newblock In {\em Proceedings of the 20th ACM SIGKDD international conference
  on Knowledge discovery and data mining}, pages 701--710. ACM, 2014.

\bibitem{qiu2018network}
J.~Qiu, Y.~Dong, H.~Ma, J.~Li, K.~Wang, and J.~Tang.
\newblock Network embedding as matrix factorization: Unifying deepwalk, line,
  pte, and node2vec.
\newblock In {\em Proceedings of the Eleventh ACM International Conference on
  Web Search and Data Mining}, pages 459--467. ACM, 2018.

\bibitem{roweis2000nonlinear}
S.~T. Roweis and L.~K. Saul.
\newblock Nonlinear dimensionality reduction by locally linear embedding.
\newblock {\em science}, 290(5500):2323--2326, 2000.

\bibitem{sharan2007network}
R.~Sharan, I.~Ulitsky, and R.~Shamir.
\newblock Network-based prediction of protein function.
\newblock {\em Molecular systems biology}, 3(1):88, 2007.

\bibitem{suel_gc01}
T.~Suel and J.~Yuan.
\newblock Compressing the graph structure of the web.
\newblock pages 213--222, 02 2001.

\bibitem{tang2015line}
J.~Tang, M.~Qu, M.~Wang, M.~Zhang, J.~Yan, and Q.~Mei.
\newblock Line: Large-scale information network embedding.
\newblock In {\em Proceedings of the 24th international conference on world
  wide web, WWW15}, pages 1067--1077. International World Wide Web Conferences
  Steering Committee, 2015.

\bibitem{wang2018graphgan}
H.~Wang, J.~Wang, J.~Wang, M.~Zhao, W.~Zhang, F.~Zhang, X.~Xie, and M.~Guo.
\newblock Graphgan: Graph representation learning with generative adversarial
  nets.
\newblock In {\em Thirty-Second AAAI Conference on Artificial Intelligence},
  2018.

\bibitem{wold1987principal}
S.~Wold, K.~Esbensen, and P.~Geladi.
\newblock Principal component analysis.
\newblock {\em Chemometrics and intelligent laboratory systems}, 2(1-3):37--52,
  1987.

\bibitem{Ying:2018:HGR}
R.~Ying, J.~You, C.~Morris, X.~Ren, W.~L. Hamilton, and J.~Leskovec.
\newblock Hierarchical graph representation learning with differentiable
  pooling.
\newblock In {\em Proceedings of the 32Nd International Conference on Neural
  Information Processing Systems}, NIPS'18, pages 4805--4815, USA, 2018. Curran
  Associates Inc.

\bibitem{Zhang2017}
D.~Zhang, J.~Yin, X.~Zhu, and C.~Zhang.
\newblock {Network Representation Learning: A Survey}.
\newblock {\em IEEE Transactions on Big Data}, PP(c):1, 2017.

\bibitem{zhou_comp}
F.~Zhou.
\newblock Graph compressiony.
\newblock In {\em Department of Computer Science and Helsinki Institute for
  Information Technology HIIT}, page 1–12, 2015.

\end{thebibliography}

\end{document}